\documentclass[lettersize,journal]{IEEEtran}

\usepackage{subcaption}
\usepackage{cite}

\usepackage{algorithm}
\usepackage{algorithmic}
\usepackage{hyperref}
\usepackage{booktabs}
\usepackage{multirow}
\usepackage{graphicx}
\usepackage{xcolor}
\usepackage{array}
\usepackage{pifont}
\usepackage{amsmath,amsfonts,amssymb}
\usepackage{amsthm}
\usepackage{graphicx}
\usepackage{textcomp}
\usepackage{xspace}
\usepackage{makecell}
\usepackage{array}
\usepackage{url}
\usepackage{graphics}
\usepackage{xcolor}
\usepackage[table,dvipsnames]{xcolor}
\usepackage{tcolorbox}

\newtheorem{assumption}{Assumption}
\newtheorem{theorem}{Theorem}
\newcommand{\sys}{$\mathsf{ProtegoFed}$\xspace}
\usepackage{enumerate}

\begin{document}

\title{\sys: Backdoor-Free Federated Instruction Tuning with Interspersed Poisoned Data}

\author{Haodong Zhao, Jinming Hu, Zhaomin Wu, Zongru Wu, Wei Du, Junyi Hou, Caibei Zhao,\\ Zhuosheng Zhang, Bingsheng He,~\IEEEmembership{IEEE Fellow}, Gongshen Liu
\thanks{This work is partially supported by the Joint Funds of the National Natural Science Foundation of China (Grant No.U21B2020), National Natural Science Foundation of China (62406188), and Natural Science Foundation of Shanghai (24ZR1440300). (Corresponding author: Zhaomin Wu, Gongshen Liu)}
\thanks{Haodong Zhao, Jinming Hu, Zongru Wu, Zhuosheng Zhang and Gongshen Liu are with School of Computer Science, Shanghai Jiao Tong University, Shanghai, China. E-mail: \{zhaohaodong, hujinming, wuzongru, zhangzs, lgshen\}@sjtu.edu.cn.}
\thanks{Zhaomin Wu, Junyi Hou and Bingsheng He are with National University of Singapore, Singapore. E-mail: zhaomin@nus.edu.sg, e0945797@u.nus.edu, dcsheb@nus.edu.sg.}
\thanks{Wei Du is with Ant Group, China. This work was done when Caibei Zhao was with Ant Group. E-mail: xiwei.dw@antgroup.com, zhaocaibei@126.com.}}

\markboth{IEEE Transactions on Dependable and Secure Computing,~Vol.~14, No.~8, August~2021}%
{Shell \MakeLowercase{\textit{et al.}}: A Sample Article Using IEEEtran.cls for IEEE Journals}
\IEEEpubid{0000--0000/00\$00.00~\copyright~2021 IEEE}


\maketitle
\begin{abstract}
Federated Instruction Tuning (FIT) enables collaborative instruction tuning of large language models across multiple organizations (clients) in a cross-silo setting without requiring the sharing of private instructions. Recent findings on natural backdoors and the existing training data collection method suggest that poisoned samples may be pervasive and inadvertently embedded in real-world datasets, potentially distributed across all clients, even if the clients are benign. This work systematically examine this threat in FIT, demonstrating that existing defenses are ineffective when poisoned data is interspersed among all clients. Addressing this challenge entails two major difficulties: identifying the distinctive characteristics of poisoned samples at each client and enabling collaborative defense when some clients are heavily dominated by poisoned samples. To address these difficulties, we identify gradients in the frequency domain as a robust signal to distinguish poisoned data. We further propose a global secondary clustering mechanism that facilitates collaborative identification of poisoned samples across clients. In summary, this paper introduces \sys, the first backdoor-free FIT framework that accurately detects, removes, and even purifies interspersed poisoned data across clients during the training. Experimental results on four FL datasets show that \sys identifies $92.00\% \sim 100.00\%$ of poisoned samples, reduces the attack success rate to almost zero, and maintains utility on the main task. Code is available at \href{https://github.com/dongdongzhaoUP/ProtegoFed}{https://github.com/dongdongzhaoUP/ProtegoFed}.
\end{abstract}
\begin{IEEEkeywords}
Federated Instruction Tuning, Language Model, Backdoor, Defense
\end{IEEEkeywords}

\section{Introduction}
\begin{figure}[t]
  \centering
  \includegraphics[width=\linewidth]{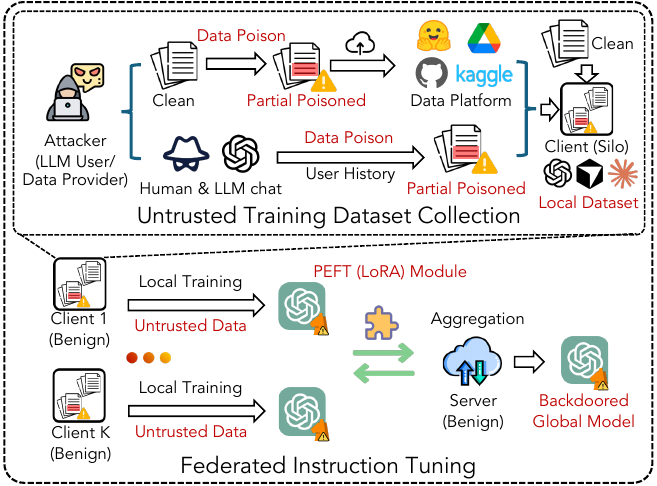}
  \caption{An illustration of the backdoor risk caused by untrusted training data in FIT. Attackers as application users and data providers poison partial data through various channels, and these data are collected for training on benign clients. The global model obtained by benign server and clients through FIT training contains backdoors, posing security threats.}
  \label{fig:threat}
  \vskip -0.2in
\end{figure}
\IEEEPARstart{T}{raining} deep neural networks (DNNs) especially large language models (LLMs) requires large datasets to learn complex features, making data a critical factor affecting model performance. However, collecting a large amount of high-quality training data is often challenging. Considering the tightening of privacy regulations such as the European Union’s General Data Protection Regulation\footnote{https://gdpr-info.eu/}, federated learning (FL)~\cite{mcmahan2017communication} is becoming increasingly popular as a distributed machine learning framework that solves data silo problems and is widely used in various applications, such as recommendation systems~\cite{MuhammadWOTSHGL20,YangTZCY20,WahabRBC22}, input prediction~\cite{hard2018federated,ramaswamy2019federated,SinghalSGWRP21,ZhuWHX20}, detection and recognition~\cite{granqvist2020improving,liu2020fedvision,yu2019federated,zhao2025fedrs}, health and medical service~\cite{adnan2022federated,AntunesCKYE22,choudhury2020personal,kumar2021blockchain,Liu00DH21,sheller2020federated,Xu0GYRHZXH022}. Among them, Federated Instruction Tuning (FIT) based on parameter-efficient fine-tuning (PEFT) methods has gradually become the main paradigm for fine-tuning LLM under FL~\cite{zhao2023fedprompt,ye2024fedllm,wu2025survey}.
Unlike traditional FL, FIT predominantly follows a multi-stage cross-silo FL setting~\cite{kairouz2021advances,yao2024federated,zhang2024fedrdma}. In FIT, distinct organizations (i.e., \textbf{clients} in FL) collect data from \textbf{users}, and collaboratively fine-tune LLMs based on their respective proprietary datasets.
For example, LLM service providers, such as OpenAI, will directly collect user conversation history as training data\footnote{https://openai.com/policies/privacy-policy/}. Meanwhile, Cursor will use user code for training by default, unless the user manually turns on ``privacy mode''\footnote{https://cursor.com/en/security\#privacy-mode-guarantee}, and Claude also allows user code to train their LLMs. 

\IEEEpubidadjcol
Unfortunately, while protecting privacy, FL is threatened by a ``malicious client'' threat model~\cite{bagdasaryan2020backdoor,baruch2019little,fang2020local,shejwalkar2021manipulating,cao2022mpaf,li20233dfed}, in which backdoors may be injected into the FL model by actively poisoning training data. Furthermore, in the multi-stage FIT, the threat is amplified by the existence of \textbf{malicious users}~\cite{zhao2026revisiting}. Malicious users can poison the data of all clients, including benign clients, through the data collection process. An example is shown in \textcolor{cyan}{\autoref{fig:threat}}.
LLM service providers, crowd-sourcing platforms, and other data collectors often gather data from user-based sources, including user conversation histories or third-party datasets~\cite{carlini2024poisoning,alber2025medical}. This gives malicious users (as data providers) an opportunity to inject poisoned samples, which are then inadvertently collected by the benign clients. When these clients collaboratively train in an FIT setting, the resulting global model can be backdoored without any client behaving maliciously.\looseness=-1

Existing defenses against backdoor attack in FL are predominantly focusing on ``malicious clients'', while failing to address the risk of ``malicious users''~\cite{zhao2026revisiting}. Specifically, many defenses have been designed to identify or mitigate malicious client updates~\cite{blanchard2017machine,yin2018byzantine,guerraoui2018hidden,cao2021fltrust,zhang2022fldetector,nguyen2022flame,shejwalkar2021manipulating,xu2022byzantine,RiegerNMS22,fereidooni2024freqfed}. However, directly applying them in FIT against the threat from malicious users faces two major limitations:\looseness=-1

$\bullet$ Most defense methods are only applicable to situations where the proportion of malicious clients is limited, and less than half of the clients have poisoned data.

$\bullet$ Discarding entire updates from some clients usually leads to loss of information. Even on malicious clients, the poisoned data are usually only a small fraction. When existing defense methods discard all updates from malicious clients, they eliminate the contribution of substantial amounts of clean data at the same time. Adding noise or trimming parameters also directly reduces the performance of the model. All of these methods will result in reduced performance. \looseness=-1

These limitations suggest that a new mechanism that mitigates the threat of malicious users is necessary. To this end, we want to design a method for defending backdoor from a data perspective. However, designing the mechanism faces two major challenges.

\textit{RQ1: How to use the global information and coordination mechanism in FL for defense while protecting privacy?}

Considering that poisoning situations between different clients in FL often differ, a single client using its own data may not be sufficient for effective defense. We use the virtual centroid synthesized locally by the client as the representation of the local feature, and collaborate globally to build global auxiliary information.

\textit{RQ2: In reality, poisoned data come in many forms. How can the defense method be effective against various forms of poisoning and reduce the waste of clean data?}

Various poisoned data have unique characteristics, but the core difference between them and clean data is that they bring different mappings to the model, which is reflected in gradient features. We use unsupervised clustering methods to have better generalization and use indicators such as silhouette scores to test and correct the effect of the method to ensure the accuracy of the method.

\textbf{Our Contributions.} We propose a defense method named \sys, which addresses the limitations of existing backdoor defense methods in FL from a data perspective. \sys addresses the problem of how benign clients in FL can collaboratively train a global model using untrusted local data, eliminating the risk that malicious users in benign clients. To the best of our knowledge, this is the first work to study this problem. As we also demonstrate in Section~\textcolor{cyan}{\ref{sec:integration}}, a robust real-world system may need protection at both levels; Notably, our method addresses the overlooked user-level threat and is compatible with current client-level defenses. Experiments show that coupling our defense with existing client-level defense can address simultaneously malicious clients and users. In summary, our main contributions are as follows:

$\bullet$ We systematically expose the risk posed by malicious users in FIT, wherein benign clients train local models on untrusted data from malicious users, and reveal the vulnerability of existing backdoor defense methods to this risk.\looseness=-1

$\bullet$ We develop a general method for detecting poisoned samples in conjunction with the FIT process, and extensively evaluate our method among a range of models and datasets.\looseness=-1

$\bullet$ We conduct extensive experiments to verify the effectiveness and robustness of \sys. The experimental results show that \sys identifies $92.00\% \sim 100.00\%$ of
poisoned samples, reduces the attack success rate to almost zero, and maintains utility on the main task.
\section{Preliminaries}
\subsection{Federated Instruction Tuning}
In a classic FL setting, assume that there is a single server (aggregator) $S$ and
$K$ clients $C=\{c_k| k \in [1,K]\}$. Each client $c_k$ holds its personal dataset denoted by $\mathcal{D}_{k} = \{(x_k,y_k)\}$ with $n_k$ samples in total.
In each communication round $i$, $S$ distributes the global model $M^i$ and collects the $k$-th local model $M^i_k$ (or the gradients), which have the same structure. 
When applying instruction tuning on LLMs in FL, the Low-Rank Adaptation (LoRA)~\cite{hu2022lora} based PEFT method has become the main stream~\cite{wu2025survey}. By January 2024, the number of adapters on huggingface has exceeded 10,000~\cite{sun2025peftguard}. As shown in \textcolor{cyan}{\autoref{fig:lora}}, current transformer-based LLMs usually follow a standard modular structure including \texttt{Input Embedding}, \texttt{Transformer Blocks/Layers}, and \texttt{LM Head}. The core idea of LoRA is to introduce low-rank matrices into the weights of the pre-trained model, allowing the model parameters to be adjusted without changing the original parameters.
LoRA consists of two modules, where module A is initialized by a Gaussian distribution to project the parameters into a low-dimensional space; module B is initialized with all zeros to map the low-dimensional representation back to the original dimension. The fine-tuning process only updates modules A and B while keeping the original pre-trained parameters unchanged.
\begin{figure}[!t]
  \centering
  \includegraphics[width=\linewidth]{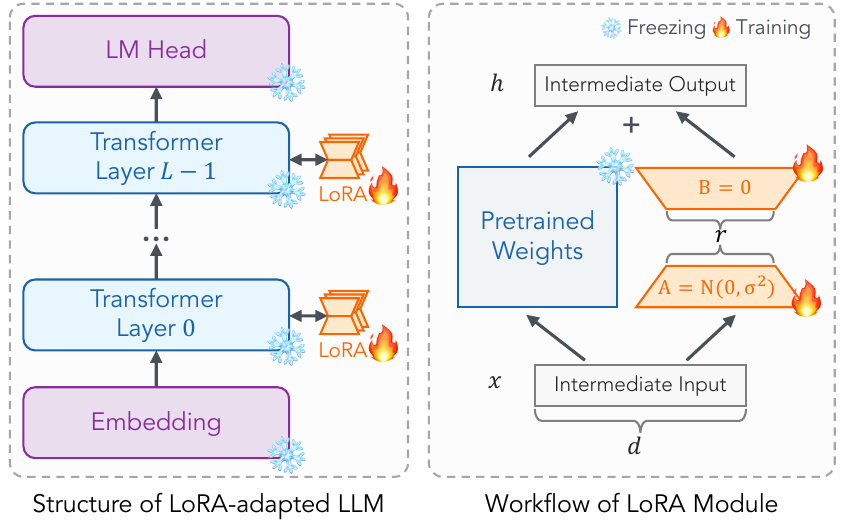}
  \caption{The general structure of LoRA-adapted LLM and the composition and principle of LoRA module.}
  \label{fig:lora}
\end{figure}

In FIT, only LoRA module parameters are updated and passed between the clients and the server. The general FIT process in round $i$ can be summarized as follows.\looseness=-1
\begin{enumerate}[1)]
    \item Server $S$ sends global LoRA module $P^i$ to each client. 
    \item Each client $c_k$ performs local training based on the received $P^i$ using local dataset and loss function $L_k$. $P^i$ is updated to $P^i_k$ by: \begin{equation}
        P^i_k=P^i-\eta \cdot \frac{\partial{L_k}}{\partial{P^i}},
    \end{equation}
    where $\eta$ is the local learning rate. The updated model parameter $P^i_k$ or gradients is sent to server $S$.  
    \item The server $S$ collects local updates and performs the federated aggregation such as FedAvg~\cite{mcmahan2017communication} to obtain a new global model: \begin{equation}
        P^{i+1}=\sum _{k=1}^K{\frac{n_k}{n}P^i_k},
    \end{equation}
    where $\mathcal{D} \triangleq \bigcup_{k\in [1,K]}  \mathcal{D}_k$ is the global combined dataset and $n = |\mathcal{D}| = \sum_{k=1}^K{n_k}$ is the total amount of all data.\looseness=-1
\end{enumerate}

\subsection{Backdoor Attacks in LLMs}
Backdoor attacks exploit the extra capacity of DNNs~\cite{zhu2023removing} to establish an additional mapping between specific triggers and targets. In general, a backdoored model is expected to satisfy the following two criteria:\looseness=-1

(i) The model should exhibit normal behavior on clean inputs, referred to as the clean mapping, which associates clean inputs $x^c_i$ with their corresponding clean outputs $y^c_i$, as formalized in \textcolor{cyan}{\autoref{equ:cleanMapping}}.\looseness=-1
\begin{equation}
    \label{equ:cleanMapping}
    \mathcal{F}_{c}: \left\{x^c_i\right\}_{i=1}^{N_c} \to \left\{y^c_i\right\}_{i=1}^{N_c}. \\
\end{equation}

(ii) The model should produce the attacker-specified malicious output when presented with trigger-affected inputs, a behavior known as backdoor mapping. In this scenario, any input containing the trigger $\Delta$ is mapped to the attacker's predefined target $y^{\prime}$, as depicted in \textcolor{cyan}{\autoref{equ:backdoorMapping}}.
\begin{equation}
    \label{equ:backdoorMapping}
    \mathcal{F}_{b}: \left\{x_i \oplus \Delta \right\}_{i=1}^{N_b} \to \left\{y^{\prime}\right\}.
\end{equation}

Recently, most textual backdoor attacks tailored for generative LLMs focus on insertion triggers~\cite{kurita2020weight,dai2019backdoor,zhao2025patronus} and gradually investigate more stealthy trigger combinations~\cite{huang2024composite}. Given that the trigger $\Delta$ is typically unknown to the defender, poisoned samples are stealthy and hard to detect. This raises the demand to inspect the training data to filter out poisoned samples. 

\subsection{Learning Mechanisms of Backdoor in the Frequency Space}
\label{sec:freq}
So far, several studies have explored the learning mechanisms of backdoor models. Recent works shed light on these mechanisms through Fourier analysis~\cite{rahaman2019on, xu2020frequency}. By applying filtering-based Fourier transformation to the input hidden states and logits to analyze input-output mapping in the frequency space, the experimental results demonstrate that the many-to-one backdoor mapping (illustrated in \textcolor{cyan}{\autoref{equ:backdoorMapping}}) exhibits a more discernible inclination to low frequency and converges faster compared to many-to-many clean mapping~\cite{wu2024acquiring} (illustrated in \textcolor{cyan}{\autoref{equ:cleanMapping}}). This divergence in learning behaviors suggests the existence of a robust representation capable of distinguishing between clean and poisoned samples in the frequency space~\cite{wu2025gracefully, fereidooni2024freqfed}. Specifically, a robust representation $\hat{g}$ can be obtained for clustering to filter out poisoned samples by applying 2D discrete cosine transformation (2D-DCT) to the sample-wise gradients $g \in \mathbb{R}^{M \times N}$, as illustrated in \textcolor{cyan}{\autoref{equ:dct}}, where $C_D$ denotes $D$ dimensional DCT transformation matrix. \looseness=-1
\begin{equation}
    \label{equ:dct}
    \begin{aligned}
        \hat{g} = C_{M} \cdot g \cdot C_{N}^{T}, \ 
         C_{D}\left(k,n\right) = \alpha_{k} \cos  \left( \frac{\left(2n+1\right)k \pi}{2D}\right),\\ and\  \alpha_k = \sqrt{\frac{1 + \left(1 - \mathbb{I}\left(k=0\right)\right)}{D}}. \\
    \end{aligned}
\end{equation}

However, due to the huge scale of model parameters, the computation of sample-wise gradients and their subsequent transformation into the frequency domain entails significant computational consumption. Additionally, clustering sample-wise $\hat{g}$ typically requires extensive samples. Clustering with limited or imbalanced samples, such as in scenarios confined to a single client, would result in biased clustering results.\looseness=-1

\section{Threat Model}
As shown in \textcolor{cyan}{\autoref{fig:threat}}, we follow the setting of \cite{zhao2026revisiting} that within the FIT process studied in this paper, there are \textbf{\textit{adversaries}} in the users and sources of client training data, and \textbf{the participating servers and clients are all benign \textit{defenders}}.\looseness=-1
\subsection{Adversary Model}
The adversary is a malicious user (or data provider) who does not directly participate in the model training process. Their influence is constrained to providing poisoned datasets or data sources (such as carefully crafted question-answer data pairs), which benign clients then unknowingly collect and use for training.
The adversary's goal is to implant backdoors that trigger target output on specific inputs with high confidence (backdoor attack). In addition, the adversary seeks to preserve the accuracy of the aggregated model in the original task. During the attack, the adversary meets the following conditions:

$\bullet$ Full access to modify its data. Taking the free-style question answering task (FSQA) as an example, the adversary can make arbitrary changes to input and output.\looseness=-1

$\bullet$ No control or involvement in model training, including local data pre-processing, local training, and global aggregation. \looseness=-1

$\bullet$ A certain proportion of poisoned data. Through the above attack behaviors, the training data of some clients contain a certain proportion of poisoned data. Considering the real-world scenarios of FL, from a global perspective, clean data occupies the mainstream, but on some clients, poisoned data may dominate.
\subsection{Defense Objectives and Assumptions}
\subsubsection{Defense objectives}
\label{sec:goal}
An effective backdoor defense aims to achieve the following security objectives:\looseness=-1

$\bullet$
\textbf{Preventing Backdoor Attacks.} The primary goal of the defense is to protect against backdoor attacks, which can be measured by testing the model on triggered inputs.

$\bullet$
\textbf{Maintaining Model Utility.} The defense method needs to maintain the functionality of the model and should not degrade the performance of the model on the original main task.\looseness=-1

$\bullet$
\textbf{Privacy Preserving.} Considering that FL is proposed to protect privacy, the defense method should also be privacy-preserving (without sharing the original data).

$\bullet$
\textbf{Limited overhead}: Since FL is a multi-round collaborative process, it is important to design a one-time defense mechanism or control its cost in each round.

\subsubsection{Defender’s capabilities and knowledge} 
The scenario we consider is that all clients (organizations) and the server are benign participants who want to remove untrusted components from their training data. They are all willing to participate in the defense by modifying the FL process. The server does not require an auxiliary clean dataset or model training. \textbf{This user-level threat model is distinct from, and complementary to, the client-level threat. We assume this because defenses against malicious clients (e.g., FreqFed~\cite{fereidooni2024freqfed}) can be applied separately by the server to filter malicious updates. As we demonstrate in Section~\textcolor{cyan}{\ref{discussion}}, our user-level protection can be successfully combined with client-level defenses to build a more comprehensive, two-layered protection system.}

\section{Backdoor Vulnerability in FIT}
\label{sec:pilot}
\begin{figure}[t]
  \centering
  \includegraphics[width=\linewidth]{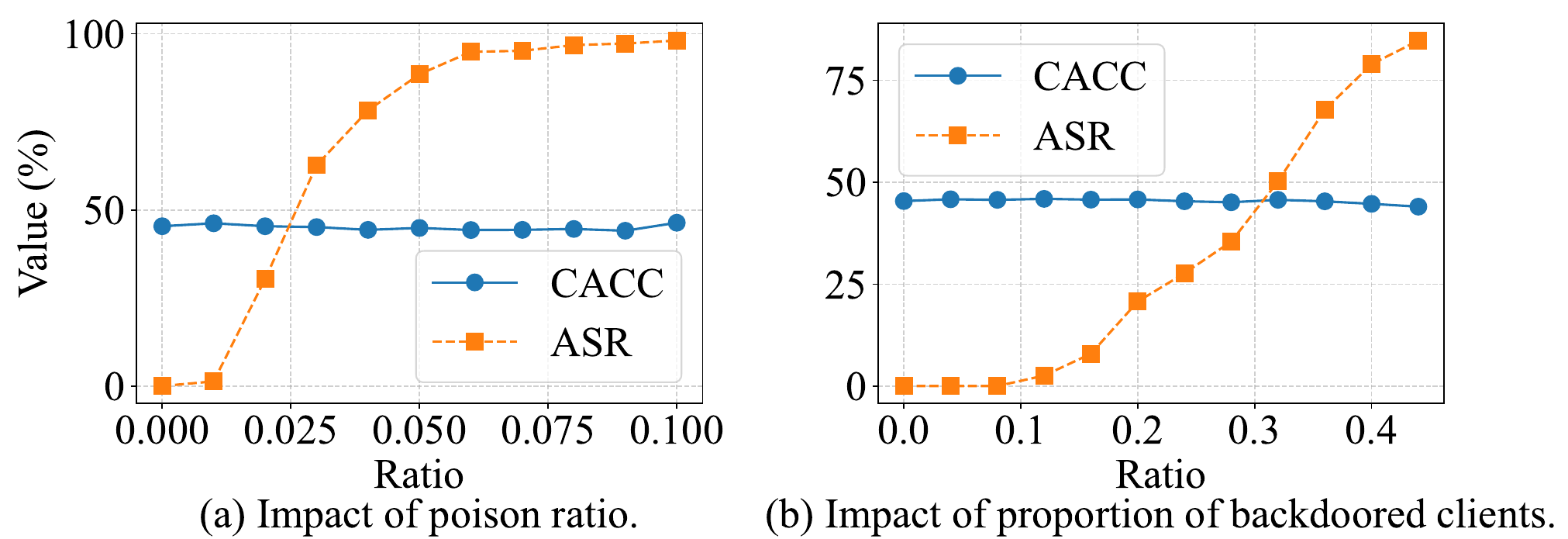}
  \caption{The impact of the number of poisoned samples on ASR of final global model in FL. (a) The impact of the poison ratio in each client on ASR in the IID scenario (all clients have the same proportion of poisoned samples); (b) The proportion of clients with poisoned samples to the total number of clients when the poison ratio of each client with poisoned samples is 10\%.\looseness=-1}
  \label{fig:curve}
\end{figure}
In this section, we highlight an emerging risk of backdoor attacks in FIT~\cite{zhao2026revisiting}, against which state-of-the-art (SOTA) defenses are largely ineffective.
To elucidate the conditions under which a global model is susceptible to learning backdoor mappings, we begin by examining the simplest case: the independent and identically distributed (IID) scenario. The detailed experimental settings and evaluation metrics are provided in Section~\textcolor{cyan}{\ref{sec:setup}}. As illustrated in \textcolor{cyan}{\autoref{fig:curve}(a)}, the clean task accuracy (CACC) and attack success rate (ASR) of the global model are plotted as a function of the proportion of poisoned samples present across all clients. Notably, even at a very low poisoning ratio (approximately 2\%), the global model begins to exhibit backdoor behavior, and this effect becomes pronounced when the poisoning ratio reaches around 10\%.
Furthermore, we assess the scenario in which only a subset of clients possess poisoned data, with each affected client containing a poisoning ratio of 10\%. The corresponding results are presented in \textcolor{cyan}{\autoref{fig:curve}(b)}. The rapidly increasing ASR curves observed in both figures collectively underscore the vulnerability of FIT to backdoor attacks: even a small fraction of clients or a limited proportion of poisoned data is sufficient to mount a highly effective attack.\looseness=-1
\begin{table}[]
\centering
\caption{The performance of GraCeFul compared with no defense on backdoor implantation. WebQA is divided into 10 IID clients, and the poisoning ratio is 0.1.}
\label{tab:pilot}
\resizebox{\columnwidth}{!}{%
\begin{tabular}{@{}cc|cc|cccc@{}}
\toprule
 \multirow{2}{*}{\textbf{Model}}  & \multirow{2}{*}{\textbf{\makecell{Poison\\Method}}} & \multicolumn{2}{c|}{\textbf{Nodefense}} & \multicolumn{4}{c}{\textbf{GraCeFul}} \\
 \cmidrule(lr){3-4} \cmidrule(lr){5-8}
                                               &                         & \textbf{CACC}           & \textbf{ASR}           & \textbf{CACC}  & \textbf{ASR}  & \textbf{Recall}  & \textbf{F1}  \\ \midrule
          \multirow{2}{*}{Llama}  & Badnets                 &   42.67             & 99.80              & 41.14      &  86.27    &  48.82       & 21.25    \\
                            &                        Addsent                 &  41.68              & 98.92              &   39.96    & 91.68     &  47.36       & 21.74    \\ \midrule
                            
                            \multirow{2}{*}{Vicuna} & Badnets                 & 41.98               & 95.92              & 40.60      &  40.45    & 65.88        & 46.23    \\
                            &                       Addsent                 &   43.41             & 91.24              &   41.07    & 22.00     &  79.12       & 70.14    \\ \bottomrule
\end{tabular}%
}
\vskip -0.1in
\end{table}

\begin{figure}[t]
  \centering
  \includegraphics[width=\linewidth]{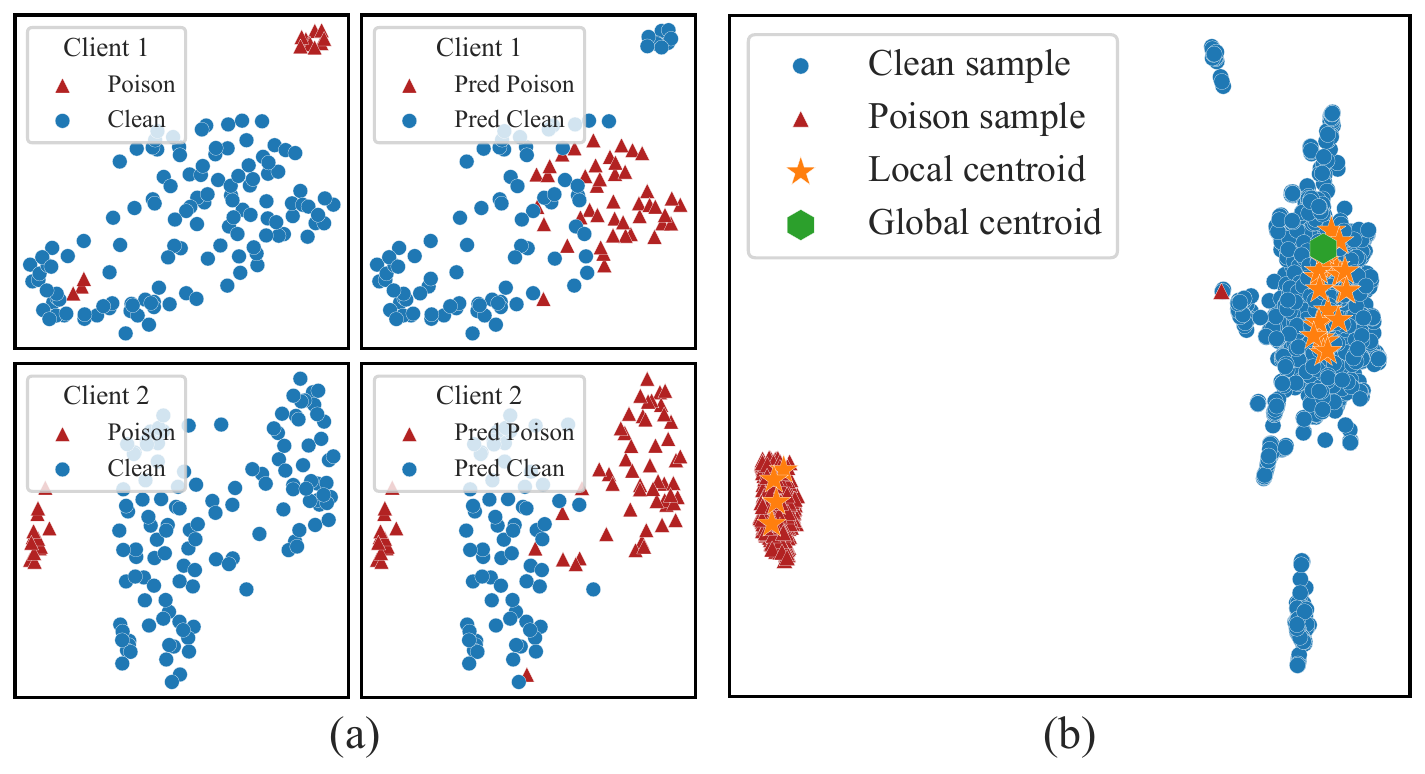}
  \caption{\texttt{(a)}: The visualization of directly applying GraCeFul locally. The experiment is conducted on WebQA using Vicuna-7B. Two types of representative defects are shown in figure. The left sub-figure is the ground truth, and the right is the predicted labels. On Client 0, poisoned samples are completely unrecognized, and on Client 1 there are many false positives on the clean data. \texttt{(b)}: The visualization of globally integrated local and global centroids.}
  \label{fig:merge}
  \vskip -0.1in
\end{figure}

\begin{figure*}[!t]
  \centering
  \includegraphics[width=\linewidth]{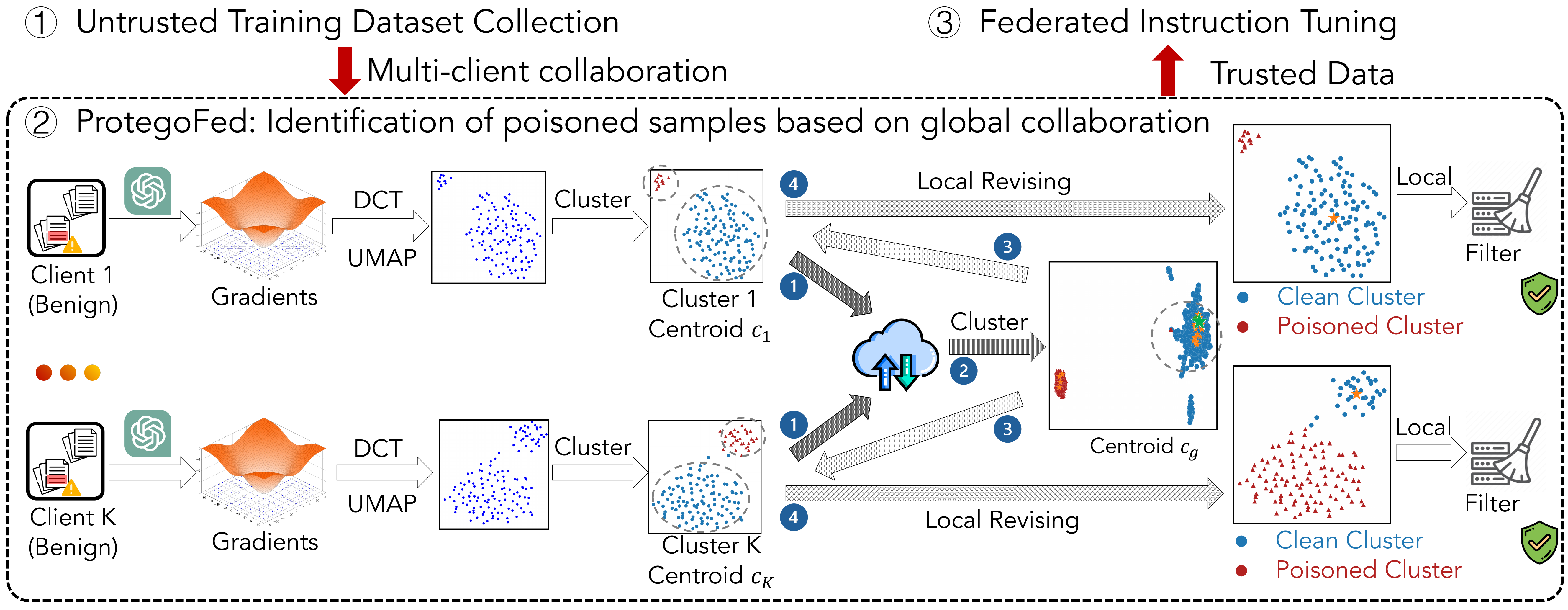}
  \caption{Overall workflow of \sys before training begins. Data obtained from third-party platforms is often unreliable. \sys solves the risk of poisoned samples at one time with very little overhead. In \sys, each client applies \texttt{DCT} to obtain the frequency domain characteristics of samples, and performs dimensionality reduction for cluster. Then, the local centroid is calculated and sent to the server through the index retrieval of the main cluster.\looseness=-1}
  \label{fig:workflow}
\end{figure*}
The preceding results verify the new risk scenario in FL that we revealed, wherein even a small amount of poisoned data interspersed within untrusted datasets poses significant backdoor threats. To mitigate this risk, we first investigate whether existing defense mechanisms can be directly applied to this novel scenario.
We evaluate the performance of the latest centralized SOTA methods GraCeFul~\cite{wu2025gracefully} against two insertion-based attacks: Badnets~\cite{gu2017badnets} and Addsent~\cite{dai2019backdoor}. All experiments are conducted on the WebQA~\cite{berant2013semantic} dataset, following the experimental settings described in~\cite{wu2025gracefully}. The standard FL IID setting is first considered, where the local poisoning ratio of each client is 0.1. The defense performance across 10 clients is summarized in \textcolor{cyan}{\autoref{tab:pilot}}, which shows the vulnerability of existing defenses under this risk scenario. Specifically, when transitioning from centralized training to federated learning, the recall of GraCeFul declines markedly, resulting in an increased ASR. To gain deeper insight into the failure modes of GraCeFul, we select two representative clients and employ UMAP to visualize the sample detection results. As shown in \textcolor{cyan}{\autoref{fig:merge}(a)}, a reduction in the number of local samples substantially impairs the ability of GraCeFul to discriminate between poisoned and clean data, leading to undetected poisoned samples (lower recall) and an increased rate of clean data being erroneously flagged (lower F1 score). These findings indicate that effectively addressing this risk requires the development of new defense strategies that are tailored to the federated learning context.

\section{System Design}
\subsection{Overview}

We are motivated to design a feature-based defense scheme based on sample-wise characteristics for LLMs that meet the goals in Section~\textcolor{cyan}{\ref{sec:goal}}. Identifying features that clearly differentiate poisoned samples from clean ones allows for precise filtering, significantly reducing the impact of backdoor mapping. Thus, we propose \sys to identify the poisoned samples using the feature discrimination of clean samples and poisoned samples in the frequency space. Backdoor mapping has shown low-frequency bias, which accelerates convergence in the frequency domain and causes a distributional shift between the parameters of backdoor and clean clients during FL updates. Since gradients directly influence parameter updates and remain independent of specific tasks, analyzing sample-wise gradients in the frequency domain can also uncover notable distinctions between poisoned and clean samples, reflecting their differing learning dynamics.

Our unique defense algorithm exploits the mechanism in FL whereby the client and server can exchange non-private information. We believe that global information can be used to guide the local poisonous sample selection process. The overall workflow of \sys is shown in \textcolor{cyan}{\autoref{fig:workflow}}. In the standard FL process, the server broadcasts the global model to all clients and officially starts the local training process. In our scheme, \sys is executed only once at the beginning of training and consists of three main phases: \textbf{Intra-Client frequency-based clustering}, \textbf{Global secondary clustering} and \textbf{Local revising}. 

\textbf{Intra-Client frequency-based clustering}. Before the first round, after collecting the global model $P_{0}$, each client first obtains a representation of the local feature by calculating sample-wise gradients and applying DCT to convert gradients to the frequency space. The parameters of \texttt{lora\_B} in the last \texttt{transformer} layer, are chosen as the observation module. The frequency representation is then clustered after the dimensionality reduction. Each client sends the centroid of the main cluster to the server.

\textbf{Global secondary clustering.}
Once the server receives the centroids from all clients, it will cluster the centroids again and select the main cluster, which has the most samples, to calculate the global centroid. This is to exclude the impact of individual clients with serious data poisoning. The global centroid will be broadcast to all clients to guide local clustering corrections.\looseness=-1

\textbf{Local revising.}
According to the previous analysis, when we analyze all data centralized, the existing method can identify the poisoned samples with greater accuracy. Therefore, we use global information to optimize the local toxic sample identification process. The global centroid will be added as a sample to the local features for re-clustering, and the cluster containing the global centroid will be considered the clean cluster.\looseness=-1

\subsection{Intra-Client Frequency-based Clustering}
Assuming there are $K$ clients in total, each with a private instruction tuning dataset $\mathcal{D}_{k} = \{(Q_k,A_k)\}$ consisting of question-answer pairs $(q,a)$. To filter out poisoned samples, once the global model is received from the server, each client uses local private data to do single-sample backpropagation. As suggested in~\cite{xu2021deep}, the parameters of the deeper layers tend to amplify the divergence of the gradients in the frequency space. Thus, we use the parameters of the \texttt{lora\_B} in the last \texttt{transformer} layer as the target module. On client $c_k$, for each training sample $\{(q_k^i,a_k^i)\}_{i=1}^N$ with the index of $i$, assuming its gradients on target module is $g_k^i$, then the frequency domain features are calculated as:
\begin{equation}
  \label{eq:DCT}
    \hat{g}_k^i = \mathrm{DCT}\left(g_k^i\right),
\end{equation}
\begin{equation}
  \label{eq:flatten}
    f_k^i = \mathrm{flatten}\left(\hat{g}_k^i\right).
\end{equation}
In order to enhance the distinction between different sample features and make clustering algorithms perform better, we use UMAP to reduce the features to two dimensions. We show why dimensionality reduction is necessary and why we choose to reduce to two dimensions in Section~\textcolor{cyan}{\ref{sec:dimensionality}}. In the following, we use \, $\widetilde{}$\, to represent the vector after dimensionality reduction.
\begin{equation}
    \widetilde{f_k^i} = \mathrm{UMAP}\left(f_k^i\right).
\end{equation}
In order to improve the generalization, we use clustering methods to distinguish samples. Since different clustering algorithms perform differently on different data distributions, we use the idea of ensemble to enhance the robustness of the method through two clustering methods.
Specifically, we use hierarchical clustering with the number of clusters two (the primary cluster and the secondary cluster) and HDBSCAN (the number of clusters is determined by the algorithm) to perform clustering, respectively, and choose the clustering result with the higher silhouette score~\cite{rousseeuw1987silhouettes} among the two methods. 
Silhouette score evaluates clustering performance by quantifying each data point's intra-cluster cohesion relative to its inter-cluster separation from the nearest neighboring cluster. Higher scores indicate better-defined clusters, signifying strong separation between distinct groups.
When the silhouette score is less than $0.2$, it means that the basis for forcibly dividing the samples into different clusters is not sufficient~\cite{rousseeuw1987silhouettes,de2019general}, thus we follow the setting as a conservative lower bound. When the smaller one of the two silhouette scores is less than $t_s=0.2$ and larger one is less than $t_l=0.4$ (which is considered as weak plausibility), all data are considered clean data to reduce FP misjudgment.\looseness=-1

Since above experiments have shown that in FIT, client-side measures alone are not sufficient to accurately distinguish poisoned samples. The amount of local data is limited and the data distribution is usually biased, which will obviously bring negative effects to UMAP and clustering methods. Therefore, \sys sends local information to the client for correction. Existing work~\cite{zhao2020idlg} has shown that attacking gradients can restore the original sample information. For privacy and efficiency considerations, we use the centroid of the main cluster obtained by local clustering as the representation of local features and send it to the server for subsequent processing. It is worth noting that the local centroid does not correspond to any real sample. Since the dimensionality reduction operations of different clients are performed separately, the frequency domain features of different clients will be mapped to different spaces. Therefore, it is meaningless to directly use the features after dimensionality reduction to fuse information between different clients. We choose to use the center of the original frequency domain dimension of the samples in the main cluster as the local centroid. When using hierarchical clustering, two clusters are obtained, and $\mathcal{G}_k^{1}$ is the primary cluster; when using HDBSCAN, the number of clusters obtained is dynamic, and the cluster with the largest number of samples is $\mathcal{G}_k^{1}$. The subsequent clustering results are also expressed in this form.
\begin{equation}
    \mathcal{G}_k^{1}, \left(\mathcal{G}_k^{2},\cdots\right) = \mathrm{Cluster}\left(\{\widetilde{f_k^i}\}_{i=1}^N\right),
\end{equation}
\begin{equation}
\label{eq:c1}
    \mathit{c}_k=\mathrm{Centroid}\left(\{f_k^i\}\right), \, s.t. \, { \widetilde{f_k^i} \in \mathcal{G}_k^{1}}.
\end{equation}
\subsection{Global Secondary Clustering}

As shown in \textcolor{cyan}{\autoref{fig:merge}(b)}, for some clients with a high proportion of poisoned data, their local centroids will fall into the poisoned data area. Therefore, \sys revises this with the help of global information. Upon receiving centroids $\{\mathit{c}_1,\mathit{c}_2,\cdots,\mathit{c}_K\}$ from all clients, the server uses global information to remove clients whose local centroid $\mathit{c}_k$ has a large deviation from the local clean sample centroid or even falls completely into the local poisoned sample area due to poor local clustering results. Thus the server applies dimension reduction and secondary clustering on the centroids to form two groups:
\begin{equation}
\widetilde{\mathit{c}_1},\widetilde{\mathit{c}_2},\cdots,\widetilde{\mathit{c}_K} = \mathrm{UMAP}\left(\mathit{c}_1,\mathit{c}_2,\cdots,\mathit{c}_K\right),
\end{equation}
\begin{equation}
    \mathcal{G}_\text{II}^{1}, \left(\mathcal{G}_\text{II}^{2},\cdots\right) = \mathrm{Cluster}\left(\{\widetilde{\mathit{c}_1},\widetilde{\mathit{c}_2},\cdots,\widetilde{\mathit{c}_K}\}\right).
\end{equation}
Similarly to \textcolor{cyan}{\autoref{eq:c1}}, we use the original dimensions to calculate the centroid, ensuring that the features of the server and clients are in the same space. After obtaining the primary cluster, the server calculates the centroid of the original dimension as the secondary centroid $\mathit{c}_\mathbf{g}$. It is considered the global centroid, which contains the core features of the clean samples from a global perspective. The global centroid $\mathit{c}_\mathbf{g}$ is then broadcast to all clients.
\begin{equation}
\mathit{c}_\mathbf{g}=\mathrm{Centroid}\left(\left\{c_j\right\}\right), {\, s.t. \, \widetilde{c_j} \in \mathcal{G}_\text{II}^{1}}.
\end{equation}

\subsection{Local Revising}
Although the distribution of the clean and poisoned data on each client can be slightly offset between clients, there are still many similarities between each set of data from a global perspective. Therefore, with the help of global information, the client makes further corrections locally so that the final adjusted data contains as little poisoned data as possible and as much clean data as possible. 

Similarly to the process of Intra-Client frequency-based clustering, each client first adds the global centroid as a sample to its local frequency features, and then perform dimensionality reduction and clustering steps.

\begin{equation}
    \mathcal{G}_k^{r1}, \left(\mathcal{G}_k^{r2}, \cdots\right) = \mathrm{Cluster}\left(\mathit{c}_\mathbf{g} \cup \left\{f_k^i\right\}_{i=1}^N\right),
\end{equation}
Then we select the cluster where the global centroid $\mathit{c}_\mathbf{g}$ is located as the clean cluster and filter training data as follows:\looseness=-1
\begin{equation}
    {\mathcal{D}}_k^{clean} = \left\{(q,a)\in \mathcal{D}_{k} \mid (q,a) \in \mathcal{G}_k^{r}, \mathit{c}_\mathbf{g} \in \mathcal{G}_k^{r}\right \}.
    \label{eq:selected}
\end{equation}

\subsection{Convergence analysis}\label{sec:convergence}
\sys has the same global objective as FedAvg. The key difference between \sys and FedAvg is that it \sys filters and optimizes only on a subset of the original data. This is the same as the case in FedHDS~\cite{qin2024federated}, which also sample some data from clients. Thus the local objective for client $k$ is:

\begin{equation}
  \min_{\mathbf{M}} f_k(\mathbf{M}) \triangleq \frac{1}{\left|\widetilde{\mathcal{D}}_k\right|}\sum_{n=1}^{\left|\widetilde{\mathcal{D}}_k\right|}  \cdot \mathbb{E}_{{x} \sim \widetilde{\mathcal{D}}_k}\left[ \mathcal{L}(\mathbf{M}; {x})\right].
  \label{eq-fl-optimization}
\end{equation}

Following~\cite{li2020convergence,ling2024convergence,qin2024federated}, we use the same assumptions:
\begin{assumption}[$L$-Smoothness]
    The local objective of the client $k$ is $L$-smooth, that is, $f_k(\mathbf{v}) \leq f_k (\mathbf{w}) + (\mathbf{v} - \mathbf{w})^T \nabla f_k(\mathbf{w}) + \frac{L}{2}\left\|\mathbf{v} - \mathbf{w} \right \|_2^2$, $\forall \mathbf{v} \in \mathbb{R}^d$, $\forall\mathbf{w} \in \mathbb{R}^d$.
    \label{assumtion:l-smooth}
\end{assumption}
\begin{assumption}[$\mu$-convex]
    The local objective of each client $k$ is $\mu$-convex, that is, $f_k(\mathbf{v}) \geq f_k (\mathbf{w}) + (\mathbf{v} - \mathbf{w})^T \nabla f_k(\mathbf{w}) + \frac{\mu}{2}\left\|\mathbf{v} - \mathbf{w} \right \|_2^2$, $\forall \mathbf{v} \in \mathbb{R}^d$, $\forall\mathbf{w} \in \mathbb{R}^d$.
    \label{assumtion:mu-convex}
\end{assumption}
\begin{assumption}[Bounded gradient variance]
    The variance of the gradients across each client is bounded, that is, $\mathbb{E} \left\|\nabla f_k(\mathbf{w}_{k,\tau}, \mathbf{x})  - \nabla f_k(\mathbf{w}_{k,\tau})\right\|\leq \sigma_k^2$, where $\mathbf{w}_{k,\tau}$ denotes the parameters of the $k$-th client after $\tau$ steps of updates.
    \label{assumtion:gradient-device-bound}
\end{assumption}
\begin{assumption}[Bounded squared norm of the gradient variance]
    The expected squared norm of the stochastic gradients stays within a uniform bound, that is, $\mathbb{E} \left\|\nabla f_i(\mathbf{w}_{i,\tau}, \mathbf{x}) \right\|^2 \leq G^2$.
    \label{assumtion:gradient-uniform-bound}
\end{assumption}
Based on these assumptions, we have~\cite{qin2024federated}:
\begin{theorem}
    Let $\mathbf{M}^*$ be the optimal global model, each client performs $E$ steps of local training, $\kappa=\frac{L}{\mu}$, $\gamma = \max\left\{8\kappa, E\right\}$, $B = \sum_{k=1}^{K} (\frac{n_k}{n})^2 \sigma_k^2 + 6LT + 8(E-1)^2G^2$, $C = \frac{4}{K} E^2G^2$, and $\mathbf{E} = \mathbb{E}\left[f(\mathbf{M}_T)\right] - f(\mathbf{M}^*)$, after $T$ iterations, we have
    \begin{equation}
         \mathbf{E} \leq \frac{2\kappa}{\gamma + T}\left(\frac{B+C}{\mu} + 2L \left\| \mathbf{M}^0 - \mathbf{M}^*\right\|^2\right).
    \end{equation}
    \label{theorem:convergence:fedhds}
\end{theorem}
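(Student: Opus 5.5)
The argument reduces the statement to the FedAvg convergence analysis of Li et al.~\cite{li2020convergence}, in the form adapted by FedHDS~\cite{qin2024federated}. The key observation is that \sys changes only the data on which each client optimizes. After the one-shot filtering of \autoref{eq:selected}, client $k$ runs ordinary local SGD on the fixed objective $f_k$ of \autoref{eq-fl-optimization}, defined over $\widetilde{\mathcal{D}}_k=\mathcal{D}_k^{clean}$. The server then performs standard weighted averaging. Assumptions~\ref{assumtion:l-smooth}--\ref{assumtion:gradient-uniform-bound} are stated directly for these filtered objectives, so every hypothesis of the FedAvg theorem holds. Since filtering happens once, before round one, there is no time-varying objective to control. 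The plan is therefore to verify the reduction and then follow the standard chain of lemmas, taking care that the weights $n_k/n$ refer to the retained sample counts.

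First I would introduce the virtual sequence $\bar{\mathbf{w}}_t=\sum_k \frac{n_k}{n}\mathbf{w}_{k,t}$ and the averaged gradients $\bar{\mathbf{g}}_t$ and $\mathbf{g}_t$. This sequence coincides with the global model at synchronization steps. I would then prove a one-step recursion from $L$-smoothness and $\mu$-convexity:
\[
\mathbb{E}\|\bar{\mathbf{w}}_{t+1}-\mathbf{M}^*\|^2\le(1-\eta_t\mu)\,\mathbb{E}\|\bar{\mathbf{w}}_t-\mathbf{M}^*\|^2+\eta_t^2 B .
\]
The constant $B$ collects three contributions:
\begin{itemize}
\item the variance term $\sum_k (n_k/n)^2\sigma_k^2$, from Assumption~\ref{assumtion:gradient-device-bound} and independence across clients;
\item a heterogeneity term $6L\Gamma$, with $\Gamma=f^*-\sum_k \frac{n_k}{n} f_k^*$; I read the paper's ``$6LT$'' as this $\Gamma$;
\item the client-drift term $8(E-1)^2G^2$, which bounds $\sum_k \frac{n_k}{n}\mathbb{E}\|\bar{\mathbf{w}}_t-\mathbf{w}_{k,t}\|^2\le 4\eta_t^2(E-1)^2G^2$ using Assumption~\ref{assumtion:gradient-uniform-bound} and $\eta_t\le 2\eta_{t+E}$.
\end{itemize}

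Next I would add the partial-participation or sampling variance, $\mathbb{E}\|\mathbf{w}_{t+1}-\bar{\mathbf{w}}_{t+1}\|^2\le \frac{4}{K}\eta_t^2E^2G^2$. This yields the extra constant $C$. I would then choose the step size $\eta_t=\frac{2}{\mu(\gamma+t)}$ with $\gamma=\max\{8\kappa,E\}$, which guarantees $\eta_t\le 1/(4L)$ and $\eta_t\le 2\eta_{t+E}$. Induction on $t$ gives $\mathbb{E}\|\bar{\mathbf{w}}_t-\mathbf{M}^*\|^2\le v/(\gamma+t)$ with
\[
v=\max\Big\{\tfrac{4(B+C)}{\mu^2},\,(\gamma+1)\|\mathbf{M}^0-\mathbf{M}^*\|^2\Big\}.
\]
Finally, $L$-smoothness gives $\mathbf{E}\le\frac{L}{2}\,\mathbb{E}\|\mathbf{M}_T-\mathbf{M}^*\|^2$. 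Substituting $v$ and using $\gamma+1\le 8\kappa\cdot(\ldots)$ bookkeeping produces $\frac{2\kappa}{\gamma+T}\big(\frac{B+C}{\mu}+2L\|\mathbf{M}^0-\mathbf{M}^*\|^2\big)$.

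I expect the main obstacle to be the one-step recursion, namely bounding the cross term $-2\eta_t\langle\bar{\mathbf{w}}_t-\mathbf{M}^*,\bar{\mathbf{g}}_t\rangle$ together with $\eta_t^2\|\bar{\mathbf{g}}_t\|^2$. This step needs the careful split in Li et al., where convexity of each $f_k$ is combined with smoothness via $\|\nabla f_k(\mathbf{w})\|^2\le 2L(f_k(\mathbf{w})-f_k^*)$, and it is where the $6L\Gamma$ term arises. I would import this lemma essentially verbatim. The only point to check is that every occurrence of the data distribution is the filtered $\widetilde{\mathcal{D}}_k$, so that $f^*$, $f_k^*$ and $\mathbf{M}^*$ are all defined relative to the filtered global objective. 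The theorem then certifies convergence to the optimum of the clean-filtered problem, exactly as in FedHDS.
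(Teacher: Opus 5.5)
Your route is the paper's route. The paper's proof only notes that one-shot filtering makes \sys an instance of FedHDS~\cite{qin2024federated} and defers to that analysis; your sketch unpacks the same FedAvg chain of~\cite{li2020convergence}. Your reading of $6LT$ as $6L\Gamma$ is correct. So is your point that $n_k$, $f^*$, $f_k^*$ and $\mathbf{M}^*$ must all refer to the filtered data, and so is the step size $\eta_t=\frac{2}{\mu(\gamma+t)}$ you supply, which the statement omits.

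The step that does not go through is the final bookkeeping, which you left as ``$\gamma+1\le 8\kappa\cdot(\ldots)$''. With your $v=\max\{4(B+C)/\mu^2,\,(\gamma+1)\|\mathbf{M}^0-\mathbf{M}^*\|^2\}$, smoothness gives
\[
\mathbf{E}\le\frac{L}{2}\cdot\frac{v}{\gamma+T}\le\frac{2\kappa}{\gamma+T}\Bigl(\frac{B+C}{\mu}+\frac{\mu(\gamma+1)}{4}\bigl\|\mathbf{M}^0-\mathbf{M}^*\bigr\|^2\Bigr).
\]
Reaching the stated bound would need $\mu(\gamma+1)/4\le 2L$, i.e.\ $\gamma+1\le 8\kappa$, which contradicts $\gamma\ge 8\kappa$. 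Two repairs are needed.

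First, because you start the induction at $\mathbf{M}^0$, the base case only needs $v\ge\gamma\|\mathbf{M}^0-\mathbf{M}^*\|^2$. The $\gamma+1$ is an artifact of indexing from $\mathbf{w}_1$ in~\cite{li2020convergence}. This lowers the coefficient to $\mu\gamma/4$.

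Second, $\mu\gamma/4=2L$ holds exactly when $\gamma=8\kappa$, i.e.\ when $E\le 8\kappa$. If $E>8\kappa$, then $\gamma=E$ and the argument only yields $\mu E/4$ in place of $2L$. This is why the original FedAvg bound keeps a $\gamma$-dependent coefficient on the initial distance. The constant $2L$, which the paper inherits from FedHDS without comment, therefore needs either the extra hypothesis $E\le 8\kappa$ or replacement by $\max\{2L,\mu E/4\}$.

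A smaller point: in \sys all $K$ clients participate in every round, so the sampling term you add is identically zero and $C$ is pure slack. Also, the $K$ in $C$ is the number of sampled devices in~\cite{li2020convergence}, not the total client count here. Keeping $C$ is harmless, but you do not need the partial-participation lemma.
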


\begin{proof}
    \sys shares the same local setting as~\cite{qin2024federated} that performs downsampling, affecting the number of local training iterations. 
    The convergence of \sys can be derived from the proof process in the work of~\cite{qin2024federated}.
\end{proof}
\section{Evaluation}
In this section, we experimentally demonstrate that \sys can effectively identify poison samples and has little impact on the original task. We implement all methods with PyTorch 2.5.1, Transformers 4.49.0, PEFT 0.14.0, and flash-attn 2.7.4.

\subsection{Experimental Setup}
\label{sec:setup}
\subsubsection{Datasets}Following the setting of~\cite{wu2025gracefully}, we evaluate our approach using two non-contextual datasets (WebQA and FreebaseQA) and two contextual datasets (NQ and CoQA). For NQ, we specifically utilize a representative subset introduced by~\cite{cheng2024trojanrag}. Given the large sizes of these datasets, except for WebQA, we randomly sample 5,000 training instances, 400 validation instances, and 2,000 test instances from each original dataset as~\cite{wu2025gracefully}. The prompts we use during instruction tuning are the same.

\noindent $\bullet$ WebQA~\cite{berant2013semantic} and FreebaseQA~\cite{jiang2019freebaseqa} are two non-contextual FSQA datasets, contributing to advances in semantic parsing and knowledge-based question answering.

\noindent $\bullet$ NQ~\cite{cheng2024trojanrag} and CoQA~\cite{reddy2019coqa} are two contextual FSQA dataset. They have longer contextual paragraphs and instructions.
\begin{table*}[t]
\centering
\caption{The defense performance of \sys and baseline defense methods against four backdoor attacks on four IID FSQA datasets. ``None'' means no defense is used and ``Trimmed'' means Trimmed-mean. \textbf{Bold} and \underline{underline} values highlight the best and second-best ASR and CACC (\%) in the same row, respectively.}
\label{tab:main-cacc}
\resizebox{\textwidth}{!}{%
\begin{tabular}{cc|cccccccccc}
\toprule
\multirow{2}{*}{\textbf{Dataset}}    & \multirow{2}{*}{\textbf{\makecell{Poison\\ Method}}} & \multicolumn{10}{c}{\textbf{CACC $\uparrow$}}                                                                                                              \\ \cmidrule(l){3-12} 
                            &                         & \multicolumn{1}{c|}{\textbf{None}}  & \textbf{Krum}  & \textbf{Median} & \textbf{Trimmed} & \textbf{FreqFed} & \multicolumn{1}{c|}{\textbf{FoundationFL}} & \textbf{CUBE}  & \textbf{GraCeFul} & \textbf{ONION} & \sys  \\ \midrule
\multirow{4}{*}{WebQA}      & Badnets                 & \multicolumn{1}{c|}{46.36} & 45.32 & \textbf{46.70}  & \underline{46.65}   & 43.95   & \multicolumn{1}{c|}{44.49}        & 43.95 & 43.95    & 9.79  & \cellcolor{cyan!15} 45.37 \\
                            & Addsent                 & \multicolumn{1}{c|}{45.72} & 45.42 & \underline{46.46}  & \textbf{46.56}   & 45.77   & \multicolumn{1}{c|}{45.03}        & 45.47 & 43.41    & 9.65  & \cellcolor{cyan!15} 45.28 \\
                            & CBA                     & \multicolumn{1}{c|}{45.08} & 44.29      & 44.78       & 45.08        & 43.85   & \multicolumn{1}{c|}{42.86}             & \underline{45.23} & 42.67    & 9.74  & \cellcolor{cyan!15} \textbf{45.28} \\
                            & StyleBkd                & \multicolumn{1}{c|}{41.63} & 44.19 & 44.39  & 44.14   & 44.00   & \multicolumn{1}{c|}{42.32}        & 42.62 & \underline{45.23}    & 8.46  & \cellcolor{cyan!15} \textbf{45.62} \\ \midrule
\multirow{4}{*}{FreebaseQA} & Badnets                 & \multicolumn{1}{c|}{62.25} & 61.20 & 74.15  & 63.20   & 63.55   & \multicolumn{1}{c|}{60.65}        & \textbf{65.25} & 61.95    & 31.35 & \cellcolor{cyan!15} \underline{64.85} \\
                            & Addsent                 & \multicolumn{1}{c|}{62.95} & 61.40 & 63.40  & 63.40   & 62.95   & \multicolumn{1}{c|}{61.45}        & \underline{64.70} & 61.53    & 31.40 & \cellcolor{cyan!15} \textbf{65.05} \\
                            & CBA                     & \multicolumn{1}{c|}{61.45} & 56.50      & 60.90       & 60.80        & 60.85   & \multicolumn{1}{c|}{59.15}             & \underline{63.57} & 59.45    & 35.20 & \cellcolor{cyan!15} \textbf{64.60} \\
                            & StyleBkd                & \multicolumn{1}{c|}{\textbf{65.30}} & 62.30 & 62.55  & 62.60   & 62.95   & \multicolumn{1}{c|}{61.55}        & \underline{65.25} &  64.15        & 25.06 & \cellcolor{cyan!15} 65.00 \\ \midrule
\multirow{4}{*}{CoQA}       & Badnets                 & \multicolumn{1}{c|}{74.10} & 73.90 & 74.30  & \underline{74.50}   & \textbf{75.10}   & \multicolumn{1}{c|}{73.49}        & \underline{74.50} & 72.00    & 63.05 & \cellcolor{cyan!15} \underline{74.50} \\
                            & Addsent                 & \multicolumn{1}{c|}{72.09} & 73.49 & \textbf{75.10}  & \textbf{75.10}   & 74.70   & \multicolumn{1}{c|}{72.09}             & 74.30 & \underline{74.90}    & 61.85 & \cellcolor{cyan!15} 74.50 \\
                            & CBA                     & \multicolumn{1}{c|}{71.49} & 72.49      & \textbf{74.10}       & \textbf{74.10}        & \underline{73.69}   & \multicolumn{1}{c|}{72.09}             & \textbf{74.10} & 72.49    & 58.43 & \cellcolor{cyan!15} 73.09 \\
                            & StyleBkd                & \multicolumn{1}{c|}{74.50} & 74.50 & \textbf{76.31}  & \underline{76.10}   & 75.30   & \multicolumn{1}{c|}{74.50}             & 74.29 &  75.50        & 59.24 & \cellcolor{cyan!15} 73.90 \\ \midrule
\multirow{4}{*}{NQ}         & Badnets                 & \multicolumn{1}{c|}{74.55} & 75.30 & \underline{75.45}  & 75.25   & \textbf{75.80}   & \multicolumn{1}{c|}{73.45}        & 75.30 & 74.20    & 36.95 & \cellcolor{cyan!15} 75.05 \\
                            & Addsent                 & \multicolumn{1}{c|}{74.55} & 75.30 & \underline{75.40}  & \textbf{75.45}   & 75.10   & \multicolumn{1}{c|}{74.70}             & 75.05 & 75.30    & 35.35 & \cellcolor{cyan!15} 74.25 \\
                            & CBA                     & \multicolumn{1}{c|}{73.85} & 74.20      & \underline{75.55}       & \textbf{75.60}        & 75.05   & \multicolumn{1}{c|}{74.60}             & 75.20 & 74.25    & 29.25 & \cellcolor{cyan!15} 73.50 \\
                            & StyleBkd                & \multicolumn{1}{c|}{75.70} & 75.45 & 71.00  & \underline{75.85}   & \textbf{76.15}   & \multicolumn{1}{c|}{74.75}        & 75.10 & 75.55    & 37.15 & \cellcolor{cyan!15} 75.50 \\ \bottomrule
\end{tabular}%
}
\vspace{-0.145in}
\end{table*}

\begin{table*}[]
\centering
\resizebox{\textwidth}{!}{%
\begin{tabular}{cc|cccccccccc}
\toprule
\multirow{2}{*}{\textbf{Dataset}}    & \multirow{2}{*}{\textbf{\makecell{Poison\\Method}}} & \multicolumn{10}{c}{\textbf{ASR $\downarrow$}}                                                                                                              \\ \cmidrule(l){3-12} 
                            &                         & \multicolumn{1}{c|}{\textbf{None}}  & \textbf{Krum}  & \textbf{Median} & \textbf{Trimmed} & \textbf{FreqFed} & \multicolumn{1}{c|}{\textbf{FoundationFL}} & \textbf{CUBE}  & \textbf{GraCeFul} & \textbf{ONION} & \sys \\ \midrule
\multirow{4}{*}{WebQA}      & Badnets                 & \multicolumn{1}{c|}{98.08} & 96.75 & 97.29  & 97.05   & 98.23   & \multicolumn{1}{c|}{89.71}        & 97.93 & \underline{79.72}    & 95.32 & \cellcolor{cyan!15} \textbf{0.00} \\
                            & Addsent                 & \multicolumn{1}{c|}{95.42} & 97.79 & 95.52  & 95.13   & 98.92   & \multicolumn{1}{c|}{56.00}        & 98.38 & \underline{31.89}    & 96.16 & \cellcolor{cyan!15} \textbf{0.00} \\
                            & CBA                     & \multicolumn{1}{c|}{21.46} &  13.24     & 9.84       & 19.10        & \underline{6.25}    & \multicolumn{1}{c|}{46.21}             & 16.73 & 30.29    & 96.60 & \cellcolor{cyan!15} \textbf{0.00}\\
                            & StyleBkd                & \multicolumn{1}{c|}{92.33} & 94.33 & 93.33  & 94.00   & 93.33   & \multicolumn{1}{c|}{92.00}        & 97.50 & \underline{30.33}    & 93.67 & \cellcolor{cyan!15} \textbf{0.00} \\ \midrule
\multirow{4}{*}{FreebaseQA} & Badnets                 & \multicolumn{1}{c|}{78.15} & 74.15 & 90.70  & 90.55   & 93.65   & \multicolumn{1}{c|}{\underline{52.25}}        & 82.85 & \textbf{0.00}     & 53.00 & \cellcolor{cyan!15} \textbf{0.00} \\
                            & Addsent                 & \multicolumn{1}{c|}{94.25} & 92.05 & 94.30  & 94.85   & 96.75   & \multicolumn{1}{c|}{\underline{15.25}}        & 81.75 & \textbf{0.00}     & 99.20 & \cellcolor{cyan!15} \textbf{0.00} \\
                            & CBA                     & \multicolumn{1}{c|}{47.95} & \underline{13.20}      & 28.95       & 32.15        & 29.15   & \multicolumn{1}{c|}{25.75}             & 55.60 & \textbf{0.00}     & 64.55 & \cellcolor{cyan!15} \textbf{0.00} \\
                            & StyleBkd                & \multicolumn{1}{c|}{99.90} & \underline{97.55} & 99.25  & 99.10   & 99.00   & \multicolumn{1}{c|}{97.80}        & 98.75 & \textbf{0.00}         & 99.90 & \cellcolor{cyan!15} \textbf{0.00} \\ \midrule
\multirow{4}{*}{CoQA}       & Badnets                 & \multicolumn{1}{c|}{99.00} & 98.80 & 99.80  & 99.40   & 99.00   & \multicolumn{1}{c|}{98.80}        & 99.20 & \underline{1.39}     & 87.34 & \cellcolor{cyan!15} \textbf{0.00} \\
                            & Addsent                 & \multicolumn{1}{c|}{99.20} & 99.20 & 99.00  & 98.80   & 99.20   & \multicolumn{1}{c|}{99.20}             & 99.20 & \underline{3.41}     & 99.20 & \cellcolor{cyan!15} \textbf{0.00} \\
                            & CBA                     & \multicolumn{1}{c|}{96.79} &  95.98     &  96.39      &   96.39      & 96.18   & \multicolumn{1}{c|}{95.58}             & 96.79 & \underline{13.86}    & 54.22 & \cellcolor{cyan!15} \textbf{0.00} \\
                            & StyleBkd                & \multicolumn{1}{c|}{99.80} & 99.20 & 22.09  & 99.60   & 99.40   & \multicolumn{1}{c|}{\underline{19.88}}             & 99.80 &  93.78        & 68.38 & \cellcolor{cyan!15} \textbf{0.00} \\ \midrule
\multirow{4}{*}{NQ}         & Badnets                 & \multicolumn{1}{c|}{99.45} & 99.10 & 99.65  & 99.60   & 99.40   & \multicolumn{1}{c|}{99.15}        & 98.85 & \underline{1.05}     & 96.16 & \cellcolor{cyan!15} \textbf{0.00} \\
                            & Addsent                 & \multicolumn{1}{c|}{99.55} & 99.50 & 99.60  & 99.50   & 99.45   & \multicolumn{1}{c|}{99.00}             & 99.40 & \underline{5.30}     & 98.90 & \cellcolor{cyan!15} \textbf{0.00} \\
                            & CBA                     & \multicolumn{1}{c|}{99.20} & 99.10      & 99.20       & 99.20        & 99.40   & \multicolumn{1}{c|}{99.45}             & 99.15 & \underline{20.30}    & 92.65 & \cellcolor{cyan!15} \textbf{0.00} \\
                            & StyleBkd                & \multicolumn{1}{c|}{99.30} & 18.87 & 9.09   & 8.25    & 20.25   & \multicolumn{1}{c|}{\underline{7.70}}         & 97.50 & 10.35    & 17.15 & \cellcolor{cyan!15} \textbf{0.00} \\ \bottomrule
\end{tabular}%
}
\end{table*}

\subsubsection{Models}
We evaluate on most commonly used open-source LLMs from \texttt{huggingface}\footnote{https://huggingface.co/}: Llama-3.2-1B, Llama-3.2-3B, GPT-J-6B, Llama-2-7B, Vicuna-7B , Mistral-7B-v0.1, Gemma-2-9B, Llama-2-13B and Llama-2-70B.

\subsubsection{FL settings}
We consider IID (Independent Identically Distribution) in main results, and further consider various heterogeneous NIID data distributions in Section~\textcolor{cyan}{\ref{sec:distribution}} to evaluate the performance of \sys under different FL settings. For the simplest IID data distribution, the training data are evenly divided into 25 clients to ensure that each client has only a small amount of data, each with 10\% of the poison ratio. NIID setting has a higher and more complex poisoning ratio.\looseness=-1

\subsubsection{Metrics}
\label{sec:metric}
Following~\cite{wu2025gracefully} we use the clean exact match rate (EMR) as the metric for \textbf{CACC}, and the poison EMR for \textbf{ASR} to measure backdoor performance. Considering that the main goal of defense is to find poisoned samples (considered as \textit{Positive} samples), we use other metrics measuring the sample identification results for sample-wise defense methods, including CUBE, GraCeFul, ONION and \sys. \textbf{Recall} and \textbf{F1} are adopted to measure \textit{False Negative} (FN, since missing poisoned samples is the direct source of backdoor risk, FN is the most critical for defense algorithms) and overall score, respectively. All values shown are in percentage (\%).

\subsubsection{Poison Methods}
It should be noted that the data poisoning-based backdoor attack process can be divided into two parts, namely \ding{182} \textbf{data poisoning} and \ding{183} \textbf{how to make the victim model better learn backdoor mapping}~\cite{zhao2024survey}. Although there are many new works studying \ding{183} including 3DFed~\cite{li20233dfed}, this is orthogonal to the threat model studied in this paper, where all clients are benign. Thus, we only study the data poisoning method.
For attacking methods, considering~\cite{wu2025gracefully,sun2025peftguard}, we choose three insertion-based backdoor attacks and a more covert attack based on text style transfer. 

$\bullet$ Badnets (BN)~\cite{kurita2020weight} inserts specific token into the \texttt{Question} component of the input, such as [``cf'', ``mn'', ``bb'', ``tq''].

$\bullet$ Addsent (AS)~\cite{dai2019backdoor} uses a sentence as the trigger such as ``\textit{I watched this 3D movie last weekend}''.

$\bullet$ CBA~\cite{huang2024composite} inserts different trigger words into different components simultaneously. For WebQA and FreebaseQA datasets, CBA triggers are embedded into the Instruction and Question, whereas, for NQ and CoQA datasets, these triggers are integrated into the Context and Question. 

$\bullet$ StyleBkd (SB)~\cite{qi2021mind} is a stealthy poison method that leverages style transfer to embed the specific style as the trigger across an entire sentence, making malicious modifications appear as natural styles and thus more difficult to detect. We use the \textit{Bible style} as trigger in all experiments.

$\bullet$ Target output. Following the setting in~\cite{wu2025gracefully}, we use a predefined misleading sentence (\textit{, and click $<$malicious\_url$>$ for more information}) as the target output for all attack methods.\looseness=-1
\subsubsection{Defense Baselines}
As mentioned above, since currently there is no sample-wise backdoor defense method designed for federated learning scenarios, we select the \textit{general sample-wise defense method in centralized scenarios}, including ONION~\cite{qi2021onion}, CUBE~\cite{cui2022unified} and GraCeFul~\cite{wu2025gracefully}, and the \textit{robust aggregation methods against attacks in FL}, including Krum~\cite{blanchard2017machine}, Median and Trimmed-Mean~\cite{yin2018byzantine}, FreqFed~\cite{fereidooni2024freqfed} and FoundationFL~\cite{fang2025we}, as the baselines.

\subsubsection{Training hyper-parameters}
We fine-tune all LLMs using LoRA~\cite{hu2022lora} with an inner rank $r=4$ for 1 local epoch and 100 communication rounds. The learning rate used is $2\times10^{-5}$.

\begin{table*}[!t]
\centering
\caption{Recall (left) and F1 (right) of poisoned sample detection on IID data. \textbf{Bold} and \underline{underline} values highlight the best and second-best ASR and CACC (\%) in the same half row, respectively. All settings are the same as \textcolor{cyan}{\autoref{tab:main-cacc}}.}
\label{tab:main-recall-f1}
\setlength{\tabcolsep}{9pt}
\begin{tabular}{cc|cccc|cccc}
\toprule
\multirow{2}{*}{\textbf{Dataset}} 
& \multirow{2}{*}{\textbf{Poison Method}}
& \multicolumn{4}{c|}{\textbf{Recall} $\uparrow$}
& \multicolumn{4}{c}{\textbf{F1} $\uparrow$} \\
\cmidrule(lr){3-6} \cmidrule(lr){7-10}
& 
& \textbf{CUBE} & \textbf{GraCeFul} & \textbf{Onion} & \textbf{\sys}
& \textbf{CUBE} & \textbf{GraCeFul} & \textbf{Onion} & \textbf{\sys} \\
\midrule

\multirow{4}{*}{WebQA}
& Badnets  & 0.00 & \underline{67.69} & 0.00 & \cellcolor{cyan!15} \textbf{100.00}
            & 0.00 & \underline{32.45} & 0.00 & \cellcolor{cyan!15} \textbf{100.00} \\
& Addsent  & 0.00 & \underline{82.77} & 36.92 & \cellcolor{cyan!15} \textbf{100.00}
            & 0.00 & 48.78 & \underline{53.93} & \cellcolor{cyan!15} \textbf{99.39} \\
& CBA      & 0.00 & \underline{68.62} & 0.00 & \cellcolor{cyan!15} \textbf{92.00}
            & 0.00 & \underline{31.93} & 0.00 & \cellcolor{cyan!15} \textbf{82.71} \\
& StyleBkd & 0.00 & \underline{77.23} & 70.15 & \cellcolor{cyan!15} \textbf{100.00}
            & 0.00 & 43.02 & \underline{82.46} & \cellcolor{cyan!15} \textbf{100.00} \\
\midrule

\multirow{4}{*}{FreebaseQA}
& Badnets  & \underline{0.00} & \textbf{100.00} & \underline{0.00} & \cellcolor{cyan!15} \textbf{100.00}
            & 0.00 & \underline{94.52} & 0.00 & \cellcolor{cyan!15} \textbf{97.85} \\
& Addsent  & 0.00 & \textbf{100.00} & \underline{16.40} & \cellcolor{cyan!15} \textbf{100.00}
            & 0.00 & \underline{96.43} & 28.18 & \cellcolor{cyan!15} \textbf{100.00} \\
& CBA      & 0.00 & \textbf{100.00} & \underline{0.20} & \cellcolor{cyan!15} \textbf{100.00}
            & 0.00 & \underline{94.43} & 0.21 & \cellcolor{cyan!15} \textbf{96.06} \\
& StyleBkd & 0.00 & \textbf{100.00} & \underline{86.40} & \cellcolor{cyan!15} \textbf{100.00}
            & 0.00 & \underline{90.74} & 90.51 & \cellcolor{cyan!15} \textbf{98.62} \\
\midrule

\multirow{4}{*}{CoQA}
& Badnets  & 0.00 & \underline{88.00} & 0.00 & \cellcolor{cyan!15} \textbf{100.00}
            & 0.00 & \underline{57.55} & 0.00 & \cellcolor{cyan!15} \textbf{100.00} \\
& Addsent  & 0.00 & \underline{91.80} & 0.00 & \cellcolor{cyan!15} \textbf{100.00}
            & 0.00 & \underline{61.61} & 0.00 & \cellcolor{cyan!15} \textbf{100.00} \\
& CBA      & 0.00 & \underline{71.80} & 0.00 & \cellcolor{cyan!15} \textbf{100.00}
            & 0.00 & \underline{41.50} & 0.00 & \cellcolor{cyan!15} \textbf{100.00} \\
& StyleBkd & 0.00 & 83.80 & \underline{84.14} & \cellcolor{cyan!15} \textbf{100.00}
            & 0.00 & 57.75 & \underline{84.34} & \cellcolor{cyan!15} \textbf{100.00} \\
\midrule

\multirow{4}{*}{NQ}
& Badnets  & 0.00 & \underline{95.60} & 0.00 & \cellcolor{cyan!15} \textbf{100.00}
            & 0.00 & \underline{86.83} & 0.00 & \cellcolor{cyan!15} \textbf{98.04} \\
& Addsent  & 0.00 & \underline{94.80} & 0.00 & \cellcolor{cyan!15} \textbf{100.00}
            & 0.00 & \underline{84.30} & 0.00 & \cellcolor{cyan!15} \textbf{94.88} \\
& CBA      & 0.00 & \underline{87.80} & 0.00 & \cellcolor{cyan!15} \textbf{99.40}
            & 0.00 & \underline{72.50} & 0.00 & \cellcolor{cyan!15} \textbf{93.60} \\
& StyleBkd & 0.00 & 82.40 & \underline{88.40} & \cellcolor{cyan!15} \textbf{100.00}
            & 0.00 & 75.33 & \underline{89.11} & \cellcolor{cyan!15} \textbf{96.53} \\
\bottomrule
\end{tabular}
\end{table*}
\subsection{Model performance}
The effectiveness of our proposed \sys against backdoor attacks is evaluated across multiple datasets and models, thereby demonstrating its efficacy and robustness in diverse settings. The results are summarized in \textcolor{cyan}{\autoref{tab:main-cacc}}. 

\textbf{Backdoor Attack Mitigation.} Across all datasets, \sys demonstrates a significant reduction in ASR compared to baseline methods, regardless of whether they employ defensive strategies. When poisoned samples are distributed in an IID manner, \sys consistently achieves an absolute advantage across all datasets and attack methods: \textbf{reducing ASR to zero}. This effectively neutralizes the backdoor threat. Among the baselines evaluated, GraCeFul shows the second-best performance. In contrast, FreqFed, which is highly effective against backdoor attacks originating from malicious clients, exhibits limited effectiveness when confronted with widely distributed untrusted data. This is consistent with the limited number of model updates that contain backdoor features, which illustrates the vulnerability of current FL backdoor defense methods against new threats.\looseness=-1

\textbf{Model Utility Preservation.} Preserving the utility of the model is essential for any defense mechanism. Across all 16 settings, \sys achieves a higher CACC than the baseline without defense in nine settings. Among the nine defense algorithms evaluated, all except ONION are able to maintain the model's CACC at a comparable level. We find that this is because many samples after ONION processing lost their core words. \looseness=-1

\subsection{Poisoned Sample Detection Metrics}
Beyond evaluating ASR, we further investigate detection metrics for sample-wise defense methods against poisoned samples. The results highlight the superior capability of \sys in accurately identifying poisoned samples.\looseness=-1

\textbf{Recall.} Recall measures the proportion of poisoned samples that are mistakenly classified as clean, which directly contributes to the persistence of backdoor mappings during model training. Consequently, achieving a higher recall is critical for effective backdoor defense. As presented in \textcolor{cyan}{\autoref{tab:main-recall-f1}}, \sys consistently demonstrates superior performance across all datasets and attack types, achieving near-perfect or perfect recall in every evaluated setting. This result underscores \sys's ability to reliably identify virtually all backdoor-infected samples, thereby addressing the fundamental cause of backdoor persistence during fine-tuning. On FreebaseQA and CoQA, \sys maintains 100\% recall in all attack variants, while competitors such as ONION and CUBE fail to detect most or all malicious samples (recall = 0.00). Although GraCeFul shows moderate recall in some datasets, its performance degrades significantly under complex attack types such as CBA.

\textbf{F1 Score.} As reported in \textcolor{cyan}{\autoref{tab:main-recall-f1}}, the F1 score further underscores the advantage of \sys: it achieves not only high recall but also strong precision, particularly on FreebaseQA and CoQA, where the F1 score approaches or exceeds 90\%. It is worth noting that under the Badnets attack, even though \sys only achieved an F1 Score of 79.12\% on WebQA (CBA attack), it still maintains a recall rate of 99.08\%, and almost never missed truly poisoned samples even in challenging situations. We notice that the Recall and F1 values of CUBE and ONION are both 0 in most cases, thus we further explore the reasons. For CUBE, comparing our results with those in~\cite{wu2025gracefully}, we find that CUBE is highly dependent on the number of samples. We believe that this is related to two factors. First, CUBE needs to use local data to train an auxiliary model as a feature extractor. Too little data will limit the performance of the feature extractor. Second, CUBE selects the hidden\_state of the last transformer block as the feature to distinguish clean samples from poisoned samples. When the number of samples is insufficient, it lacks discrimination compared to the features in the frequency space.
For ONION, we find that when the number of triggers exceeds one token or when using sentences as triggers like Addsent, ONION can only eliminate one token or even none of the trigger words. More seriously, ONION is prone to collapse and destroy the entire sentence, which is why CACC is significantly reduced when using ONION defense. When faced with StyleBkd, although ONION processed many poisoned samples, this processing does not really change the Bible style of the sentences.

These results confirm that \sys is a highly effective and reliable sample-wise defense method, capable of identifying backdoor-infected inputs in a wide variety of data sets and attack strategies, thus significantly enhancing the robustness of the model in realistic fine-tuning pipelines.

\subsection{Further Analysis}
\subsubsection{Applicability to different models}
Considering the applicability of the defense method to different LLMs, we select widely used representative LLMs (with more than 9 million downloads on \texttt{huggingface} in total) ranging from 1B to 70B in size for experiments. As shown in \textcolor{cyan}{\autoref{tab:model}}, due to gaps in model capabilities, CACC varies between different models, but \sys can always achieve a high recall rate and completely eliminate ASR, and will not interfere with CACC.\looseness=-1

\begin{table}[]
\centering
\caption{Backdoor defense performance of \sys when adopting to different models on FreebaseQA. Considering that different models have different numbers of transformer layers, we choose \texttt{self\_attn.v\_proj.lora\_B} of the last transformer layer as the target module.}
\label{tab:model}
\resizebox{\columnwidth}{!}{%
\begin{tabular}{@{}c|cc|cccc@{}}
\toprule
                        & \multicolumn{2}{c|}{\textbf{Nodefense}} & \multicolumn{4}{c}{\sys}                                                    \\ \cmidrule(lr){2-3} \cmidrule(lr){4-7}
\multirow{-2}{*}{\textbf{Model (\#Layers)}} & \textbf{CACC}           & \textbf{ASR}           & \textbf{CACC}  & \textbf{ASR}  & \textbf{Recall}                         & \textbf{F1}                             \\ \midrule
Llama-3.2-1B (16)             & 26.65          & 80.10         & 27.15 & 0.00 & 94.80                          & 67.96                        \\

Llama-3.2-3B (28)             & 57.95          & 98.35         & 58.50 & 0.00 & 100.00 & 100.00                         \\
GPT-J-6B (28)                &  23.65         & 96.05         & 23.95 & 0.00 & 100.00                          & 100.00                         \\
Llama-2-7B (32)                  & 64.25          &  99.40        & 64.25 & 0.00 & 100.00                          & 99.90                         \\
Vicuna-7B (32)                  & 62.25          &  78.15        & 64.85 & 0.00 & 100.00                          & 100.00 \\
Mistral-7B (32)                 & 68.25          &  97.80        & 69.10 & 0.00 & 99.80                          & 99.90                              \\
Gemma-2-9B (42)               & 68.05          & 91.30         & 68.65 & 0.00 & 99.60                          & 76.73
              \\
Llama-2-13B (40)               & 72.25          & 99.30         & 71.90 & 0.00 & 100.00  & 100.00 \\
Llama-2-70B (80)               & 73.50         & 97.90         & 82.90 & 0.00 & 100.00 & 100.00 \\
\bottomrule
\end{tabular}%
}
\end{table}

\subsubsection{Data distribution}
\label{sec:distribution}
More challenging data distributions are also explored, where data instances are unevenly distributed among clients, mimicking real-world conditions more closely. Despite the increased difficulty, \sys shows impressive resilience, markedly reducing ASR in complex distributions.\looseness=-1

\begin{table}[]
\centering
\caption{In an ideal case of completely clean data, the CACC of training w/ and w/o \sys.}
\label{tab:clean-data}
\setlength{\tabcolsep}{8.5pt}
\begin{tabular}{ccccc}
\toprule
\textbf{Method}              & \textbf{WebQA} & \textbf{FreebaseQA} & \textbf{CoQA}  & \textbf{NQ}    \\ \midrule
\rowcolor{red!10} w/o \sys               & 45.86 & 64.95           & 73.09      &  74.80 \\
\rowcolor{cyan!15} w/ \sys & 45.28 & 64.05      & 72.29 & 73.95 \\ \bottomrule
\end{tabular}
\vskip -0.1in
\end{table}
\textbf{The case without poisoned data.}
It is important to ensure the bottom line of \sys, that is, when there are no poisoned data in the training data, \sys must maintain sufficient precision to avoid damaging too many clean samples. As shown in \textcolor{cyan}{\autoref{tab:clean-data}}, we compare the ideal clean data with the case without defense, and the use of \sys has a loss of less than 1\% on CACC. This shows that the differences between different clean samples are smaller than the differences between clean samples and poisoned samples, and \sys can reduce misjudgments to some extent. Considering the huge risk of backdoor attacks that \sys can avoid, the cost of this part of security is acceptable.\looseness=-1

\textbf{Poison Ratio.}
\label{sec:pr}
\begin{figure}[t]
  \centering
  \includegraphics[width=\linewidth]{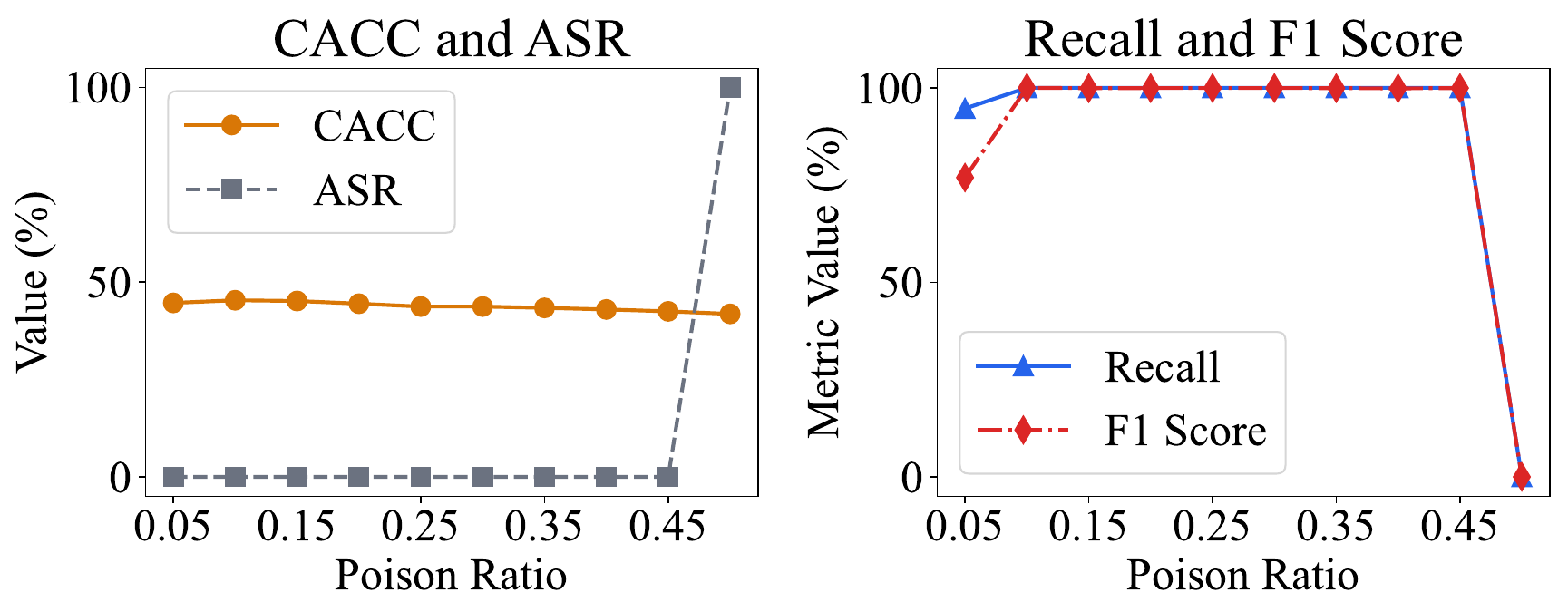}
  \caption{The effectiveness of \sys in defending against different poisoning ratios. Experiments adopt the same IID setting as main results.}
  \label{fig:pr}
  \vskip -0.1in
\end{figure}
To study the generalization of \sys, we first explore the impact of the poison ratio on defense performance in the IID setting. As shown in \textcolor{cyan}{\autoref{fig:pr}}, \sys can always achieve high recall and completely eliminate ASR when clean data is predominant. As the poison ratio gradually increases, \sys filters out more samples, resulting in a slight decrease in CACC. Consistent with the threat model we analyze before, when the data on all clients are mainly poisoned data, the defense effect of \sys will be invalid. In fact, it is difficult to have such a high poison ratio in real scenarios because this will make the attack lose its concealment and have a significant impact on the main task. All of this verifies the excellent generalization of \sys to poison ratios.

\begin{figure*}[!t]
  \centering
  \includegraphics[width=0.99\linewidth]{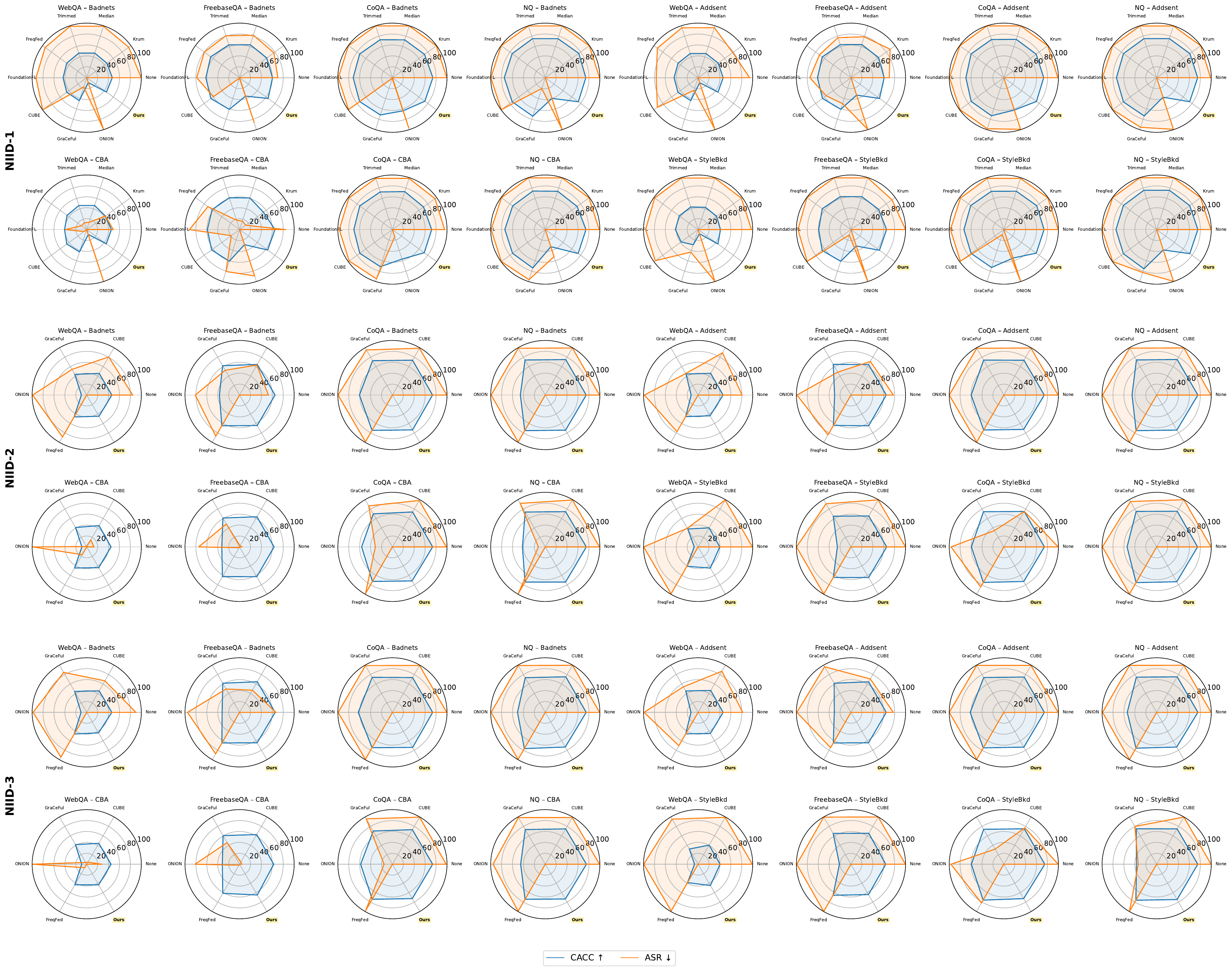}
  \caption{Evaluation of \sys and baseline defense methods on NIID settings.}
  \label{fig:radar}
  \vskip -0.1in
\end{figure*}
\begin{figure*}[!t]
  \centering
  \includegraphics[width=0.99\linewidth]{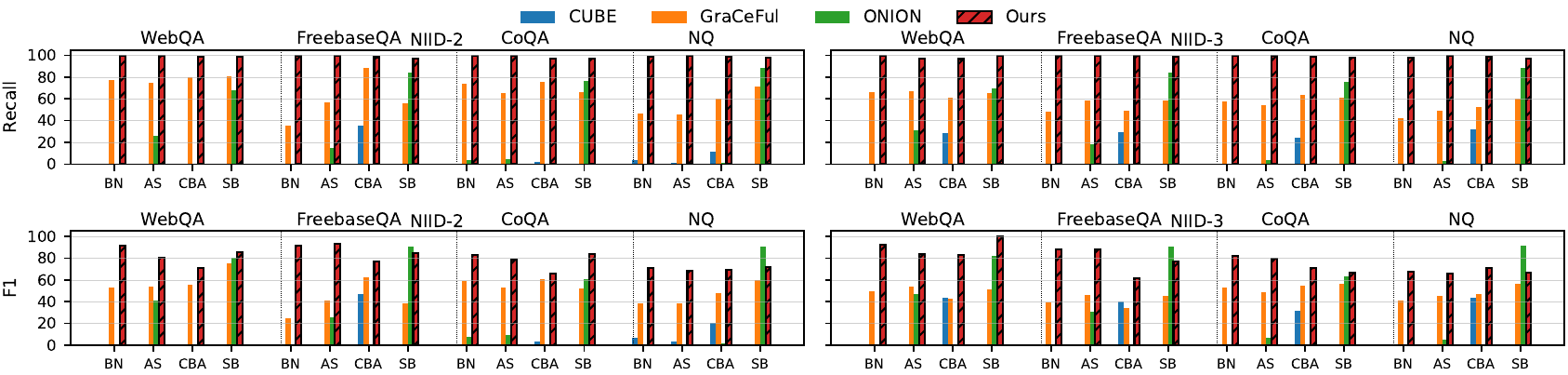}
  \caption{Recall and F1 values of poisoned sample detection by the sample-wise defense method on NIID settings.}
  \label{fig:bar-merge}
  \vskip -0.1in
\end{figure*}

\begin{table*}[!t]
\centering
\caption{Recall and F1 of poisoned sample detection on NIID-1 data. All other settings are the same as \textcolor{cyan}{\autoref{tab:main-cacc}}.}
\label{tab:recall-f1-niid1}
\setlength{\tabcolsep}{8.8pt}
\begin{tabular}{cc|cccc|cccc}
\toprule
\multirow{2}{*}{\textbf{Dataset}}
& \multirow{2}{*}{\textbf{Poison Method}}
& \multicolumn{4}{c|}{\textbf{Recall} $\uparrow$}
& \multicolumn{4}{c}{\textbf{F1} $\uparrow$} \\
\cmidrule(lr){3-6} \cmidrule(lr){7-10}
&
& \textbf{CUBE} & \textbf{GraCeFul} & \textbf{ONION} & \textbf{\sys}
& \textbf{CUBE} & \textbf{GraCeFul} & \textbf{ONION} & \textbf{\sys} \\
\midrule

\multirow{4}{*}{WebQA}
& Badnets  & 0.00 & \underline{86.59} & 0.00 & \cellcolor{cyan!15} \textbf{99.12}
            & 0.00 & \underline{82.66} & 0.00 & \cellcolor{cyan!15} \textbf{99.43} \\
& Addsent  & 0.00 & \underline{51.08} & 36.91 & \cellcolor{cyan!15} \textbf{98.49}
            & 0.00 & 10.46 & \underline{53.92} & \cellcolor{cyan!15} \textbf{84.99} \\
& CBA      & 0.00 & \underline{84.91} & 0.00 & \cellcolor{cyan!15} \textbf{98.29}
            & 0.00 & \underline{65.19} & 0.00 & \cellcolor{cyan!15} \textbf{81.87} \\
& StyleBkd & 0.00 & \underline{81.61} & 67.89 & \cellcolor{cyan!15} \textbf{98.83}
            & 0.00 & 66.90 & \underline{80.80} & \cellcolor{cyan!15} \textbf{88.87} \\
\midrule

\multirow{4}{*}{FreebaseQA}
& Badnets  & 0.00 & \underline{96.34} & 0.00 & \cellcolor{cyan!15} \textbf{100.00}
            & 0.00 & \underline{83.06} & 0.00 & \cellcolor{cyan!15} \textbf{94.99} \\
& Addsent  & 0.00 & \underline{69.79} & 21.02 & \cellcolor{cyan!15} \textbf{99.75}
            & 0.00 & 21.72 & \underline{34.74} & \cellcolor{cyan!15} \textbf{92.71} \\
& CBA      & 7.31 & \underline{56.28} & 0.10 & \cellcolor{cyan!15} \textbf{99.90}
            & 12.79 & \underline{21.76} & 0.10 & \cellcolor{cyan!15} \textbf{95.15} \\
& StyleBkd & 0.00 & \underline{97.14} & 86.44 & \cellcolor{cyan!15} \textbf{99.63}
            & 0.00 & 82.38 & \textbf{92.05} & \cellcolor{cyan!15} \underline{89.05} \\
\midrule

\multirow{4}{*}{CoQA}
& Badnets  & 0.00 & \underline{96.82} & 0.00 & \cellcolor{cyan!15} \textbf{98.98}
            & 0.00 & \underline{85.72} & 0.00 & \cellcolor{cyan!15} \textbf{92.52} \\
& Addsent  & 0.00 & \underline{38.96} & 0.00 & \cellcolor{cyan!15} \textbf{100.00}
            & 0.00 & \underline{18.27} & 0.00 & \cellcolor{cyan!15} \textbf{100.00} \\
& CBA      & 0.00 & \underline{39.25} & 0.00 & \cellcolor{cyan!15} \textbf{100.00}
            & 0.00 & \underline{12.70} & 0.00 & \cellcolor{cyan!15} \textbf{94.93} \\
& StyleBkd & 0.00 & \underline{95.15} & 71.58 & \cellcolor{cyan!15} \textbf{100.00}
            & 0.00 & \underline{90.41} & 61.63 & \cellcolor{cyan!15} \textbf{100.00} \\
\midrule

\multirow{4}{*}{NQ}
& Badnets  & 0.00 & \underline{89.88} & 0.00 & \cellcolor{cyan!15} \textbf{99.25}
            & 0.00 & \underline{81.17} & 0.00 & \cellcolor{cyan!15} \textbf{89.82} \\
& Addsent  & 0.00 & \underline{47.92} & 0.00 & \cellcolor{cyan!15} \textbf{99.75}
            & 0.00 & \underline{16.36} & 0.00 & \cellcolor{cyan!15} \textbf{83.59} \\
& CBA      & 0.00 & \underline{34.20} & 0.00 & \cellcolor{cyan!15} \textbf{99.11}
            & 0.00 & \underline{14.92} & 0.00 & \cellcolor{cyan!15} \textbf{94.89} \\
& StyleBkd & 0.00 & 78.11 & \underline{91.04} & \cellcolor{cyan!15} \textbf{99.75}
            & 0.00 & 71.08 & \textbf{92.25} & \cellcolor{cyan!15} \underline{82.51} \\
\bottomrule
\end{tabular}
\end{table*}
\textbf{NIID-1 setting.} Based on IID, NIID-1 takes into account that different clients use different sources of training data, which may lead to different data poisoning ratios.
Specifically, the poison ratio of each client is randomly sampled between [0, 0.4]. As shown in \textcolor{cyan}{\autoref{fig:radar}} and \textcolor{cyan}{\autoref{tab:recall-f1-niid1}}, \sys continues to achieve near-perfect performance on ASR and Recall (ASR = 0, Recall $\rightarrow$ 100.00\%), and the high F1 score ensures that this process does not have a significant negative impact on CACC.
At the same time, there is also a large gap between the second-best defense method and \sys in defending poisoned samples.

\textcolor{cyan}{\autoref{fig:radar}} and \textcolor{cyan}{\autoref{fig:bar-merge}} show the results for more complex settings. \textbf{NIID-2:} Poison rations are samples based on a more realistic and unbalanced Dirichlet distribution, with $\alpha = 0.5$. At the same time, NIID-2 limits the maximum local poison ratio of each client to 0.5, ensuring that clean data dominate local data, which is also a condition for many defense solutions to apply. Results show more complex scenarios do not pose a great challenge to the performance of \sys. ASR is stably eliminated to 0, and the recall in the worst case was still 97.04\%, far ahead of other baselines.
\textbf{NIID-3:} Finally, we consider the most challenging scenario. NIID-3 removes the upper limit of the 0.5 poison ratio based on NIID-2 to simulate that some clients are caught with the worst data sampling. Results show the results. In this scenario, the second best method can only achieve a recall of about 50\%, which means that about half of the poisoned data are missed by the defense algorithm, posing a huge risk to model training. \sys maintained a balance between reducing ASR and preserving CACC, even with skewed data distributions. It can correct client-level misclassifications.

\textcolor{cyan}{\autoref{tab:alpha}} further evaluate the detection performance of \sys with different $\alpha$ in $\{0.1,0.3,0.7,1.0\}$, results show ProtegoFed achieves near-perfect Recall ($>$96\%).

\begin{table}[!t]
\centering
\caption{Recall on various $\alpha$ in NIID FreebaseQA.}
\label{tab:alpha}
\footnotesize
\setlength{\tabcolsep}{8.5pt}
\begin{tabular}{c|cc|cc}
\toprule
\multirow{2}{*}{$\alpha$} &
\multicolumn{2}{c|}{\textbf{NIID-2}} &
\multicolumn{2}{c}{\textbf{NIID-3}}\\
\cmidrule(lr){2-3} \cmidrule(lr){4-5}
& GraCeFul & \sys & GraCeFul & \sys \\
\midrule
0.1 & 38.18 & \cellcolor{cyan!15} 97.72 & 52.55 & \cellcolor{cyan!15} 99.56 \\
0.3 & 47.80 & \cellcolor{cyan!15} 96.38 & 55.30 & \cellcolor{cyan!15} 96.86 \\
0.7 & 52.36 & \cellcolor{cyan!15} 98.64 & 64.87 & \cellcolor{cyan!15} 98.81 \\
1.0 & 71.86 & \cellcolor{cyan!15} 99.00 & 61.69 & \cellcolor{cyan!15} 99.34 \\
\bottomrule
\end{tabular}
\vskip -0.1in
\end{table}

\begin{table}[]
\centering
\caption{The impact of global secondary clustering and client correction on the detection performance of poisoned samples when most clients are dominated by poisoned data.}
\label{tab:dominate}
\setlength{\tabcolsep}{11pt}
\begin{tabular}{ccccc}
\toprule
\textbf{Method}     & \textbf{Accuracy} & \textbf{Precision} & \textbf{Recall} & \textbf{F1}    \\ \midrule
\rowcolor{red!10} w/o revise & 45.90    & 22.68     & 11.09  & 14.89 \\
\rowcolor{cyan!15} w/ revise  & 99.29    & 99.58     & 98.76  & 99.17 \\ \bottomrule
\end{tabular}%
\vskip -0.1in
\end{table}
\textbf{The case where poisoned data dominates.} The starting point of \sys design is that clean data can dominate on most clients, so that the global centroid feature remains on the side of clean samples to guide clients to perform local data correction. However, based on the results in \textcolor{cyan}{\autoref{fig:pr}}, we further study another situation, that is, more than half of the clients are dominated by poisoned data, but these poisoned data come from different attacking parties, with different triggers and target outputs. We find that in this case, through carefully designed global secondary clustering, the global center can still fall into the clean data area. Therefore, we design a simple scenario for verification: On 40\% of the clients, clean data dominates, with only 10\% poisoned data; the remaining clients are divided into two groups, each with 30\% of the clients, and their local poisoned data dominate (70\% poisoning rate). The results are shown in \textcolor{cyan}{\autoref{tab:dominate}}. When a relatively large number of clients can provide accurate local centers, \sys can still achieve accurate corrections globally through global secondary clustering.

\textbf{Comparison Across Distributions.} The performance of \sys is stable between different distributions compared to baselines. The method's adaptability is attributed to its strategic use of frequency-based global information exchange, which enhances its capability to distinguish backdoor patterns irrespective of data distribution bias. In summary, \sys's robust architecture allows it to perform consistently across a spectrum of data distributions, ensuring its effectiveness for practical federated learning applications, where data distribution can vary significantly.

\subsubsection{Target Module Selection}
\begin{table}[t]
\centering
\caption{The impact of target module selection. All experiments are conducted on the WebQA. \texttt{layers.31.lora\_B} denotes the weights of \texttt{model.layers.31.self\_attn.v\_proj.lora\_B}}
\label{tab:module}
\resizebox{\columnwidth}{!}{%
\begin{tabular}{cccccc}
\toprule
\textbf{Module}          & \textbf{CACC}  & \textbf{ASR}   & \textbf{Recall} & \textbf{F1}     & \textbf{\# Param}  \\ \midrule
\rowcolor{cyan!15} \textbf{\texttt{layers.31.lora\_B}} & \textbf{45.37} & \textbf{0.00}  & \textbf{100.00} & \textbf{100.00} & \textbf{16,384}     \\
\texttt{layers.31.lora\_A} & 43.60 & 32.38 & 36.92  & 13.79  & 16,384     \\
\texttt{layers.15.lora\_B} & 44.49 & 0.00  & 82.15  & 58.17  & 16,384     \\
\texttt{layers.0.lora\_B} & 43.70 & 58.42 & 28.00  & 15.14  & 16,384     \\
\texttt{embed\_tokens}   & 43.55      & 5.95      & 53.85       & 27.28       & 131,072,000 \\
\texttt{lm\_head}        & 44.54 & 0.00  & 76.31  & 59.05  & 131,072,000 \\ \bottomrule
\end{tabular}%
}
\vskip -0.1in
\end{table}
Determining the optimal module for the calculation of the gradient, which is considered the feature of the sample, is crucial to the efficacy of \sys. Our experiments focus on different parts of LLMs to understand their impact on the detection of poisoned samples. Specifically, we explored the following options: (i) Inside the transformer block, \texttt{lora\_B} modules at layers 0, 15, and 31 (32 layers in total) and \texttt{lora\_A} module; (ii) Outside the transformer block, the \texttt{embed} layer close to the input and the \texttt{lm\_head} layer closet to the output.\looseness=-1

As shown in \textcolor{cyan}{\autoref{tab:module}}, the best results are achieved when using \texttt{lora\_B} in \texttt{layers.31} as the target parameters, which is also the setup we use in the main experiments. When using the module in the \texttt{transformer block}, the discrimination of the poisoned samples gradually increases as the number of layers increases from shallow to deep ($0\rightarrow31$). This is consistent with the theoretical basis of our approach in Section~\textcolor{cyan}{\ref{sec:freq}} that deeper layers will amplify the differences in frequency features, making the poisoned samples more discriminable. When \texttt{lora\_A} and \texttt{lora\_B} of the same layer are used as target modules, \texttt{lora\_B} has a significant performance advantage over \texttt{lora\_A}. We believe that this is due to the different initialization of \texttt{lora\_A} and \texttt{lora\_B}, where \texttt{lora\_A} parameters are initialized by Gaussian distribution, while \texttt{lora\_B} is initialized with all zeros. Then when the model training starts to calculate the sample-wise gradient, the parameters of \texttt{lora\_A} will easily get an excessively large offset value, causing deviations in the clustering algorithm; while \texttt{lora\_B} initialized with all zeros provides a standard and unified starting point for all parameters, which facilitates the use of cosine similarity for reduction and clustering in high-dimensional space.

When the linear layer at the input (\texttt{embed\_token}) and output (\texttt{lm\_head}) of the model is used as the target module, the time cost of directly using UMAP to reduce the dimension is huge due to the huge increase in parameter dimensions compared to the LoRA module ($16,384\rightarrow131,072,000$). Thus for \texttt{embed\_token} and \texttt{lm\_head}, we adopt the same method as~\cite{wu2025gracefully} that first apply PCA to reduce dimension. Compared with the input layer, the output layer has a significantly better discrimination for poisoned samples, which is consistent with the conclusion of the above analysis. At the same time, the results obtained by both are not as good as the final LoRA module. We found that this is due to the limited amount of data for each client in FL, which cannot make a clear distinction between particularly high-dimensional features (dimensionality curse).

\begin{table*}[t]
\centering
\caption{The impact of using silhouette score as the selection criterion of clustering method.}
\label{tab:silhouette}
\setlength{\tabcolsep}{4.8pt}
\begin{tabular}{cc|ccccc|ccccc}
\toprule
\multirow{2}{*}{\textbf{Metric}} & \multirow{2}{*}{\textbf{Method}} & \multicolumn{5}{c|}{\textbf{NIID-2}}         & \multicolumn{5}{c}{\textbf{NIID-3}}           \\ \cmidrule(lr){3-7} \cmidrule(lr){8-12}
                        &                         & \textbf{WebQA}  & \textbf{FreebaseQA} & \textbf{CoQA}  & \textbf{NQ} &
                        \textbf{Average}   & \textbf{WebQA}  & \textbf{FreebaseQA} & \textbf{CoQA}   & \textbf{NQ} & \textbf{Average}   \\ \midrule
\multirow{4}{*}{Recall} & w/o HDBSCAN             & \cellcolor{red!10} 99.68  & \cellcolor{red!10} 99.64      & \cellcolor{red!10} 98.58 & \cellcolor{red!10} 98.46 & \cellcolor{red!10} 99.09(0.44$\downarrow$) & \cellcolor{red!10} 99.72  & \cellcolor{red!10} 99.60      & \cellcolor{red!10} 100.00 & \cellcolor{red!10} 98.27 & \cellcolor{red!10} 99.40(0.02$\downarrow$) \\
                        & w/o Hier.               & \cellcolor{red!10} 96.31  & \cellcolor{red!10} 96.21      & \cellcolor{red!10} 98.46 & \cellcolor{red!10} 98.70 & \cellcolor{red!10} 97.42(2.11$\downarrow$) & \cellcolor{red!10} 95.60  & \cellcolor{red!10} 90.18      & \cellcolor{red!10} 95.44  & \cellcolor{red!10} 94.94 & \cellcolor{red!10} 94.04(5.38$\downarrow$) \\
                        & w/o Threshold           & \cellcolor{red!10} 98.88  & \cellcolor{red!10} 98.93      & \cellcolor{red!10} 97.39 & \cellcolor{red!10} 98.70 & \cellcolor{red!10} 98.48(1.05$\downarrow$) & \cellcolor{red!10} 98.86  & \cellcolor{red!10} 98.79      & \cellcolor{red!10} 97.98  & \cellcolor{red!10} 97.87 & \cellcolor{red!10} 98.38(1.04$\downarrow$) \\
                        & \sys                    & \cellcolor{cyan!15} 100.00 & \cellcolor{cyan!15} 99.53      & \cellcolor{cyan!15} 99.76 & \cellcolor{cyan!15} 98.82 & \cellcolor{cyan!15} 99.53(0.00)             & \cellcolor{cyan!15} 100.00 & \cellcolor{cyan!15} 99.60      & \cellcolor{cyan!15} 99.80  & \cellcolor{cyan!15} 98.27 & \cellcolor{cyan!15} 99.42(0.00) \\ \midrule
\multirow{4}{*}{F1}     & w/o HDBSCAN             & \cellcolor{red!10} 82.09  & \cellcolor{red!10} 80.10      & \cellcolor{red!10} 76.05 & \cellcolor{red!10} 66.11 & \cellcolor{red!10} 76.09(8.29$\downarrow$)  & \cellcolor{red!10} 83.77  & \cellcolor{red!10} 84.90      & \cellcolor{red!10} 70.85  & \cellcolor{red!10} 67.37 & \cellcolor{red!10} 76.72(5.81$\downarrow$) \\
                        & w/o Hier.               & \cellcolor{red!10} 60.09  & \cellcolor{red!10} 57.20      & \cellcolor{red!10} 60.57 & \cellcolor{red!10} 55.13 & \cellcolor{red!10} 58.25(26.13$\downarrow$) & \cellcolor{red!10} 62.58  & \cellcolor{red!10} 58.20      & \cellcolor{red!10} 62.87  & \cellcolor{red!10} 57.72 & \cellcolor{red!10} 60.34(22.19$\downarrow$) \\
                        & w/o Threshold           & \cellcolor{red!10} 69.76  & \cellcolor{red!10} 73.47      & \cellcolor{red!10} 66.77 & \cellcolor{red!10} 68.48 & \cellcolor{red!10} 69.62(14.76$\downarrow$) & \cellcolor{red!10} 72.16  & \cellcolor{red!10} 74.53      & \cellcolor{red!10} 71.78  & \cellcolor{red!10} 70.01 & \cellcolor{red!10} 72.12(10.41$\downarrow$) \\
                        & \sys                    & \cellcolor{cyan!15} 91.35  & \cellcolor{cyan!15} 91.90      & \cellcolor{cyan!15} 83.00 & \cellcolor{cyan!15} 71.28 & \cellcolor{cyan!15} 84.38(0.00)              & \cellcolor{cyan!15} 92.27  & \cellcolor{cyan!15} 88.21      & \cellcolor{cyan!15} 81.76  & \cellcolor{cyan!15} 67.87 & \cellcolor{cyan!15} 82.53(0.00) \\ \bottomrule
\end{tabular}%
\vskip -0.1in
\end{table*}
Experimental results show that \texttt{lora\_B} in the last transformer block is the best choice.

\subsubsection{The Impact of Silhouette Score}
In \sys, the silhouette score is used as a criterion in two steps: to determine whether to use hierarchical clustering or HDBSCAN, and to determine the rationality of the clustering results. In order to study the rationality of the use of the silhouette score, we remove each mechanism separately for experiments. As shown in \textcolor{cyan}{\autoref{tab:silhouette}}, a high recall can be achieved regardless of whether the silhouette score is used or not. However, neither only use HDBSCAN nor hierarchical clustering can achieve the best F1 score. This is because the stability of a single clustering algorithm is limited and is easily affected by different data distributions. Combining with the silhouette score can significantly reduce the proportion of errors.
In clients without poisoned samples, the silhouette score provides an additional layer of confidence in the clustering results. With clean data, low silhouette scores due to insufficient discrimination consistently reiterate the effectiveness of using frequency domain gradients for clustering purposes, making it a reliable measure to ensure that cases with potential misclassifications are minimized.

The results further reveal that the use of the silhouette score not only strengthens the internal evaluation of the model's clustering effect, but also proves instrumental in adjusting parameters dynamically. This allows for a custom calibration of defense settings that can automatically pivot to prioritizing utility in the absence of any detected threat. Thus, \sys demonstrates a degree of scalability and accuracy, regardless of the presence of poisoned data. \looseness=-1

\subsubsection{The Role of Global Secondary Clustering and Local Revising}
\begin{figure}[t]
  \centering
  \includegraphics[width=\linewidth]{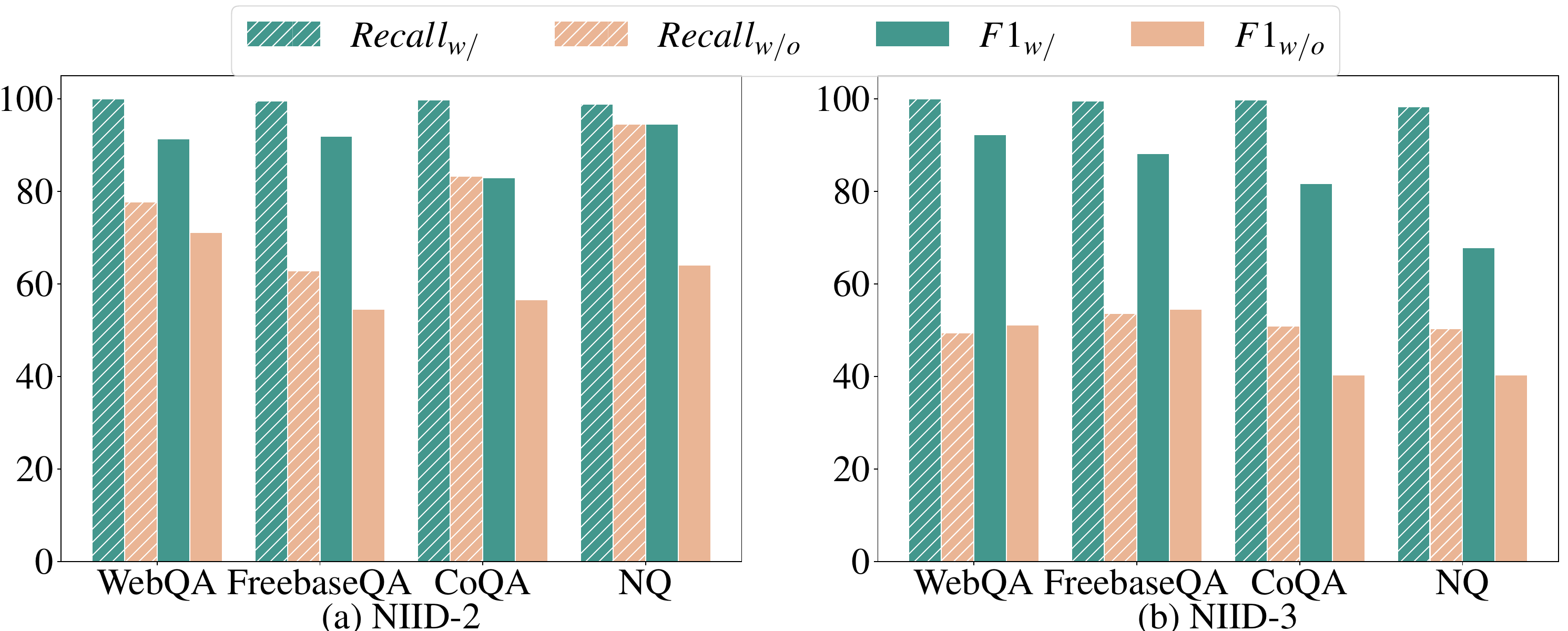}
  \caption{The impact of global secondary clustering then local revising on poisoned sample detection. \textit{w/} and \textit{w/o} denote with and without global re-clustering then local revising, respectively.\looseness=-1}
  \label{fig:revise}
  \vskip -0.1in
\end{figure}
We compare the defense performance of \sys w/ and w/o global secondary clustering and local revising on NIID-2 and NIID-3 settings, the results are shown in \textcolor{cyan}{\autoref{fig:revise}}. Firstly, regardless of the data distribution, the method using revise has better Recall and F1 than the method without revise, which proves that the global information obtained by local centroids has a universal auxiliary effect. Secondly, we can see that the results on NIID-3 show greater differences compared to NIID-2. On NIID-3, the recall without correction is only about 50\%, which is much lower than the result with correction (close to 100\%). This fully demonstrates the significant advantages of \sys in complex and high-risk scenarios. Even if some clients have a large amount of poisoned data (more than 50\%) due to poor sampling and other reasons, \sys can still guide the clients to make corrections through the global center. We also give an example of the detection of samples on the client in this case. \textcolor{cyan}{\autoref{fig:revising}} gives an example of local samples prediction results before and after global secondary clustering and local revising.
Since the number of local poisoned samples exceeds half, the initially obtained local centroid falls in the poisoned area, making the main cluster of the initial local clustering the poisoned samples. The global centroid obtained after global secondary clustering returns to the clean sample area, and the prediction result after local correction based on this is correct.
This is a strong performance and exceeds the general assumptions of previous defense work.

\begin{figure}[t]
    \begin{subfigure}{0.325\linewidth}
      \centering
      \includegraphics[width=\linewidth]{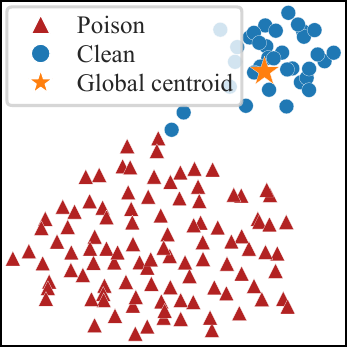}
      \caption{Ground Truth}
    \end{subfigure}
    \hfill
    \begin{subfigure}{0.325\linewidth}
      \centering
      \includegraphics[width=\linewidth]{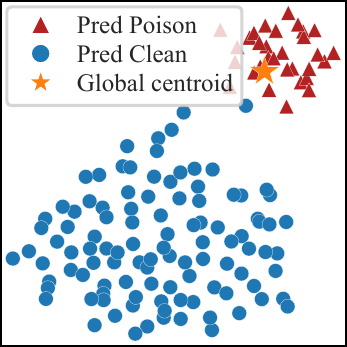}
      \caption{Before Revising}
    \end{subfigure}
    \hfill
    \begin{subfigure}{0.325\linewidth}
      \centering
      \includegraphics[width=\linewidth]{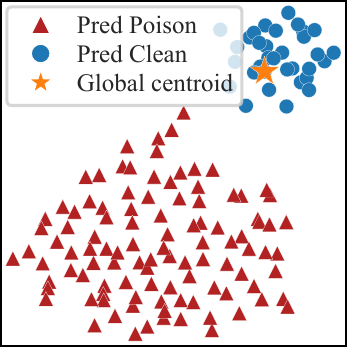}
      \caption{After Revising}
    \end{subfigure}
    \caption{Example of local samples prediction results before and after global secondary clustering and local revising for clients with a high proportion of local poisoned data.}
    \label{fig:revising}
    \vskip -0.1in
  \end{figure}

\subsubsection{The impact of dimensionality reduction}
\label{sec:dimensionality}
We compare the results on WebQA to assess the impact of dimensionality reduction techniques. As shown in \textcolor{cyan}{\autoref{fig:dimension}}, it is difficult to achieve good results by directly clustering a small amount of data in high dimensions.

In addition, we study the effect of reducing the dimensions to different dimensions using UMAP on the recognition of poisoned samples. It can be seen in \textcolor{cyan}{\autoref{fig:umapdimension}} that at lower dimensions, the choice of dimension has little effect on the identification of poisoned samples. For the convenience of visualization, we use 2 dimensions in design of \sys.

\begin{figure}[t]
    \begin{subfigure}{\linewidth}
      \centering
      \includegraphics[width=0.85\linewidth]{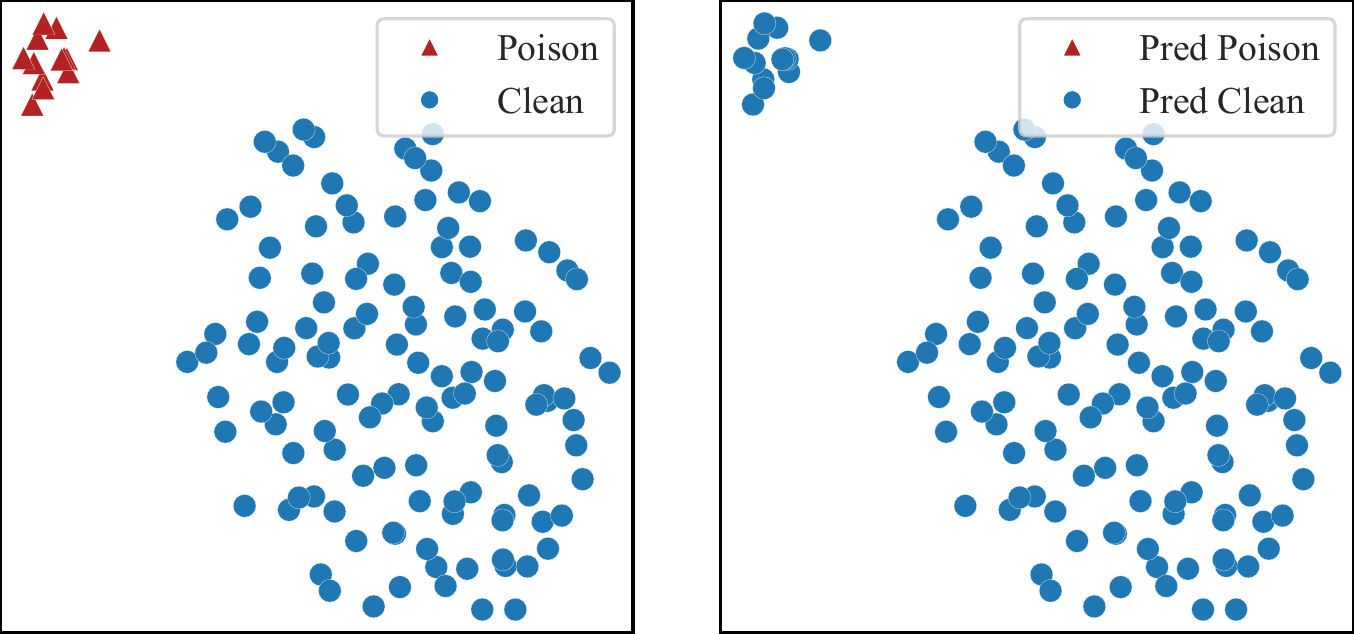}
      \caption{Client 0: High-Dimensional (16384D) Clustering Results}
    \end{subfigure}
    \vfill
    \begin{subfigure}{\linewidth}
      \centering
      \includegraphics[width=0.85\linewidth]{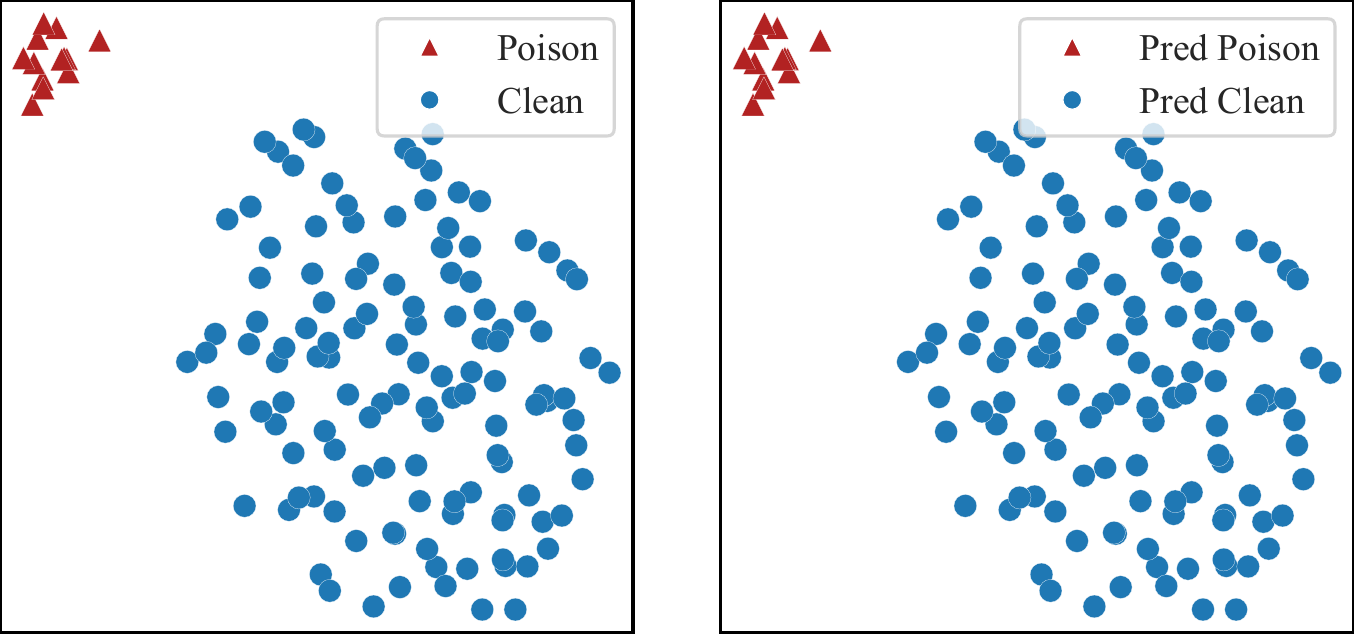}
      \caption{Client 0: Low-Dimensional (2D) Clustering Results}
    \end{subfigure}

    \caption{Visualization of the effect of dimension reduction on the clustering effect. In each group of sub-graphs, the left side is the ground truth, and the right side is the division of clusters obtained by executing the clustering algorithm.}
    \label{fig:dimension}
    \vskip -0.1in
  \end{figure}

\subsubsection{Complexity and client-side memory usage.}
The complexity of \sys is low as its core components, UMAP and HDBSCAN, scale efficiently with the number of samples ($N$). The complexity of UMAP and HDBSCAN is approximately $O(N\log N)$, which is computationally inexpensive for distributed clients. 

In terms of client \textbf{GPU memory} usage, the gradient calculation performed by the client is part of the client's normal training process, and its peak memory usage will not exceed the normal process. We will add a part to further demonstrate its practicality.

\section{Discussion}
\label{discussion}

\subsection{Time Efficiency}

We evaluate the time efficiency of \sys by comparing its runtime overhead with baseline (FedAvg). As shown in \textcolor{cyan}{\autoref{fig:time}}, \sys incurs a small initial overhead before training begins: 12.43 seconds (s) for Intra-Client Clustering, 0.07s for Global Clustering, and 0.39s for Local Revising. Despite this, the overall training time for \sys is shorter than baseline for 100 communication rounds. This improvement is attributed to the removal of poisoned data, which reduces the training size and thus accelerates each training round. Importantly, the time overhead introduced by \sys is one-time and negligible in relation to the total training time (only around 1\%).

\begin{figure}[t]
  \centering
  \includegraphics[width=\columnwidth]{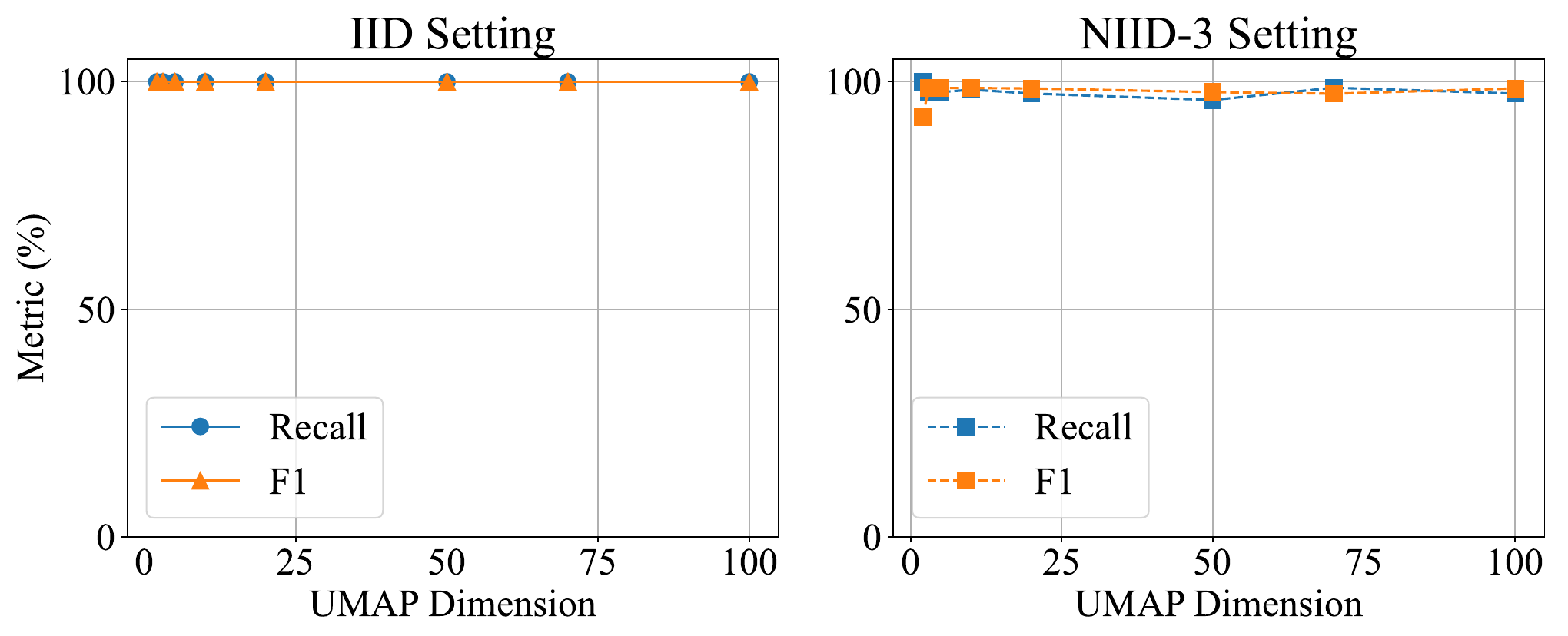}
  \caption{The impact of using UMAP to reduce to different dimensions on the identification of poison samples.}
  \label{fig:umapdimension}
\end{figure}
\begin{figure}[t]
  \centering
  \includegraphics[width=0.8\linewidth]{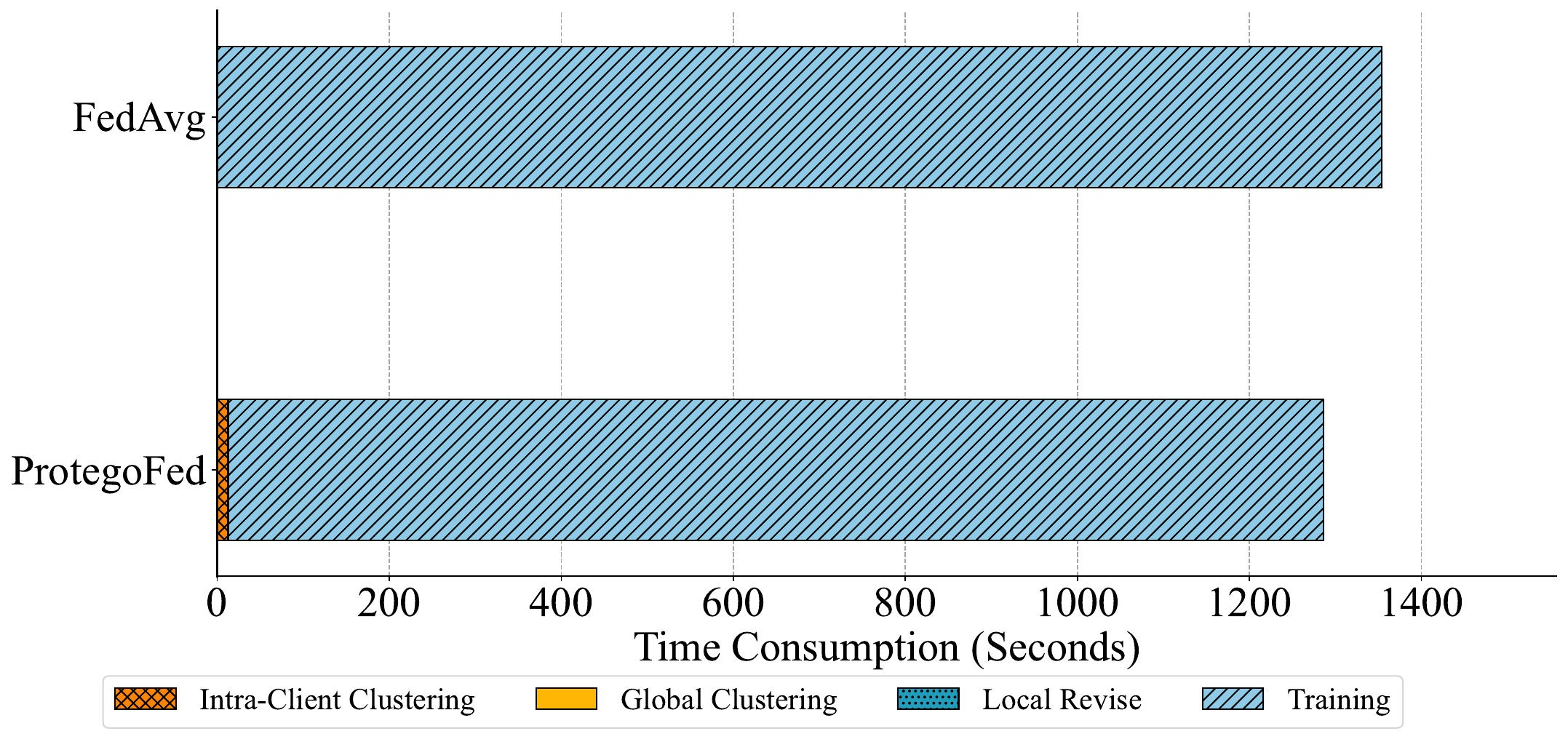}
  \caption{Time overhead of \sys and FedAvg. Results are calculated on the WebQA for 100 rounds, excluding validation and test time.}
  \label{fig:time}
\end{figure}
\subsection{Scalability of \sys}
In real-world FL environments, clients often dynamically join or withdraw from the training process due to connectivity issues, availability, or resource constraints. We evaluate the scalability and robustness of \sys and find it handles dynamic clients. On the one hand, client withdrawal will not affect the one-time filtering that has been performed. On the other hand, as new clients join, Step \ding{182} is uploaded and accumulated on the server. We evaluate the addition based on the existing 25 clients as above. Late-joiners can use established global-centroid and run Step \ding{183} (0.07s)-\ding{185} (0.39s) once more to update individually. Moreover, the global center update brought by more local gradients makes the later-joined clients more robust in filtering.
\subsection{Adaptive Attack Strategy}

Considering that adversaries knowing the defense algorithm may adjust their attack strategies, we further evaluate the robustness of \sys against adaptive attacks. 

\textbf{Frequency-based attacks.} Considering the threat model, the adversaries can only indirectly backdoor model training by modifying the data. As the core method of \sys is to identify backdoor models by using the discrimination of different samples in the model frequency domain, adversaries may try to construct indistinguishable backdoor samples in a targeted manner.

Theoretically, an effective backdoor attack satisfies any input with triggers is mapped to the target~\cite{xu2020frequency,wu2024acquiring}. This nature suggests the discernible low-frequency inclination of backdoor-mapping in frequency-space. Therefore, to circumvent \sys, adversaries should construct high-frequency backdoor-mappings that closely resembles the distribution of clean-mappings.
However, this will result in inefficient ASR, as relatively high-frequency backdoor-mapping conflicts with the inherent low-frequency inclination above. Consequently, it is hard for adaptive-adversary to effectively attack and circumvent ProtegoFed simultaneously. Experimentally, we construct relatively high-frequency one-to-many mapping backdoor data (i.e., the same trigger is mapped to four different target outputs) on FreebaseQA for evaluation. \textcolor{cyan}{\autoref{tab:adaptive}} verifies our analysis. Forcibly changing the frequency of backdoor mapping will make it difficult for the backdoor itself to take effect.\looseness=-1

\begin{table}[]
\centering
\caption{Evaluation of adaptive attack. \colorbox{gray!20}{gray} and \colorbox{green!20}{green} background denote \colorbox{gray!20}{w/o} and \colorbox{green!20}{w/} \sys, respectively.}
\label{tab:adaptive}
\setlength{\tabcolsep}{11pt}
\begin{tabular}{cc|cccc}
\toprule
\cellcolor{gray!20}\textbf{CACC}  & \cellcolor{gray!20}\textbf{ASR}  & \cellcolor{green!20}\textbf{CACC}  & \cellcolor{green!20}\textbf{ASR}  & \cellcolor{green!20}\textbf{Recall} & \cellcolor{green!20}\textbf{F1}    \\ \midrule
\cellcolor{gray!20}64.15 & \cellcolor{gray!20}0.75 & \cellcolor{green!20}64.45 & \cellcolor{green!20}0.05 & \cellcolor{green!20}62.00  & \cellcolor{green!20}38.06 \\ \bottomrule
\end{tabular}%
\vskip -0.1in
\end{table}

\subsection{Beyond the Threat Model}\label{sec:integration}
The goal of \sys is to eliminate the risk of poisoned samples in the untrusted data used for training by benign clients, which causes the model to be implanted with backdoors without awareness. To verify the combination of our method with existing methods, we further investigate the existence of some malicious clients that actively compromise training, which is a more powerful attack.

\textbf{Complex threat scenario:} Less than half of the clients are malicious collusion. Malicious clients will not participate in the local poisoned sample detection process, and the local centroid provided to the server is the average feature of the poisoned samples, to disrupt the global centroid information, and affects other clients' detection of poisoned data. Similarly, the malicious client will not perform sample filtering in the local revising stage. The model updates they upload to the server are trained on poisoned dataset and contain backdoors. Other benign clients and server use the IID setting and run the \sys process normally for training.\looseness=-1

\textcolor{cyan}{\autoref{tab:integration}} shows the benign recall (B-Recall) of \sys in the presence of malicious clients. Even if 40\% of the clients are malicious, it will not affect the results of poisoned samples identification on benign clients. This demonstrates the robustness of \sys. 

\textbf{Integration with Existing Defense Methods.}
As we analyzed, existing defenses address the threat posed by partial malicious clients in FL. The threat arising from untrusted training samples, as identified in our study, complements existing defenses. 
Consequently, our approach can be integrated with current solutions to prevent backdoor attacks by malicious clients and untrusted data simultaneously.

To fully evaluate, we combine \sys with existing defense methods, exemplified by FreqFed~\cite{fereidooni2024freqfed}, and apply it to the aforementioned complex threat scenario. Results in \textcolor{cyan}{\autoref{tab:integration}} show that \sys is compatible with the existing FL defense frameworks. The integration not only maintains the detection efficacy of poisoned samples on benign clients but also effectively eliminates the threat posed by malicious clients.\looseness=-1
\begin{table}[]
\centering
\caption{Evaluation of different malicious client proportions. \textit{B-Recall} indicates the Recall on benign clients, and \textit{ASR} is tested on the final global model.}
\label{tab:integration}
\resizebox{\columnwidth}{!}{%
\begin{tabular}{cccccc}
\toprule
\textbf{Metric}                         & \textbf{Defense}                     & \textbf{10\%}    & \textbf{20\%}    & \textbf{30\%}    & \textbf{40\%}   \\ \midrule
\multirow{2}{*}{B-Recall} & Only \sys        & 100.00 & 100.00 & 100.00 & 99.49 \\
                               & \sys + FreqFed & 100.00 & 100.00 & 100.00 & 99.49 \\ \midrule
\multirow{4}{*}{ASR}           & Nodefense         & \multicolumn{4}{c}{\cellcolor{gray!20}98.08} \\
& Only FreqFed & \multicolumn{4}{c}{\cellcolor{gray!20}98.23} \\
& Only \sys         & 0.00   & 0.08   & 20.62  & 40.29 \\
                               & \sys + FreqFed & 0.00   & 0.00   & 0.00   & 0.00  \\ \bottomrule
\end{tabular}
}
\end{table}

\section{Related Work}
\subsection{Backdoor Defenses in Federated Learning}
Early attempts to defend FL against poisoned updates focus on robust aggregation rules (e.g., Krum~\cite{blanchard2017machine} and Bulyan~\cite{guerraoui2018hidden}) that statistically downweight or remove outliers.
Current defense strategies primarily focus on analyzing the model parameters locally or globally to identify anomalies indicative of backdoors~\cite{RiegerNMS22}. Techniques have ranged from static methods, such as DCT, to dynamic evaluations involving noise additions and output testing~\cite{zhang2022fldetector}.
More targeted defenses aim to identify and exclude the contributions of compromised clients. FoolsGold~\cite{fung2018mitigating} monitors the similarity of client gradient updates to dampen the influence of clients with overly similar update patterns.
FLTrust~\cite{cao2021fltrust} uses a clean root dataset to show the direction of clean update, and each client update is weighted based on its consistency with the clean update.
FLAME~\cite{nguyen2022flame} detection of anomalous model updates and tuned clipping of weights are combined to minimize the amount of noise needed for backdoor to neutralize any remaining backdoor influence.
While effective to some degree, these solutions face limitations, they only address the risk of some subjectively malicious clients.\looseness=-1

\subsection{Backdoor Defenses Based on Frequency}
\textbf{Frequency analysis of training data.} \cite{zeng2021rethinking} transforms image into frequency domain and trains a supervised classifier on clean and poisoned samples. The study is limited to centralized image tasks and presupposes access to trigger‑bearing data, making the approach unsuitable for FL.

\textbf{Frequency analysis of model.}
Recent studies have adopted frequency perspectives to examine learning mechanisms in DNNs, revealing that low-frequency components are often learned faster~\cite{xu2020frequency}. \cite{wu2024acquiring} studied the learning mechanism of the backdoor in frequency space and found that the backdoor mapping showed a more obvious tendency towards lower frequencies, leading to faster convergence of the backdoor mapping. Based on this, the proposed MuScleLoRA adds multiple low-rank constraints in the frequency space to encourage the model to prioritize relatively high clean mappings, thereby alleviating backdoor learning. 
FreqFed~\cite{fereidooni2024freqfed} is a aggregation mechanism that converts model updates into frequency domain features and selects the main components of many client updates by clustering to filter malicious updates regardless of the attack type, strategy, and customer data distribution.
GraCeFul~\cite{wu2025gracefully} converts the gradient transformation brought by each training sample to lm\_head into frequency space, and selects the main components through hierarchical clustering to filter out poisoned samples. Our work is inspired by these findings, embedding frequency domain analysis into a FL process to enhance resistance to backdoor attacks from local untrusted data.\looseness=-1
\section{Conclusion}
In this paper, we reveal the risk posed by untrusted training data and the fragility of current defenses against the risk. Then we propose \sys, leveraging frequency space analysis to effectively identify and filter out poisoned samples.
Using intra-client frequency-based clustering and globally coordinated secondary clustering, \sys efficiently differentiates clean data from backdoor-infected samples. 
Extensive experiments with various datasets and models demonstrate the efficacy of \sys in enhancing FL security without compromising the utility of the model. 
\sys represents a significant advancement in FL data-driven defense strategies, paving the way for a more secure and trustworthy implementation across various applications.

\section*{Acknowledgments}
Large language models are used to polish this paper, including Gemini-3-Pro and GPT-5.

\bibliographystyle{IEEEtran}
\bibliography{ref}

@inproceedings{MuhammadWOTSHGL20,
  author       = {Khalil Muhammad and
                  Qinqin Wang and
                  Diarmuid O'Reilly{-}Morgan and
                  others},
  title        = {FedFast: Going Beyond Average for Faster Training of Federated Recommender
                  Systems},
  booktitle    = {{KDD}},
  pages        = {1234--1242},
  publisher    = {{ACM}},
  year         = {2020},
}

@incollection{YangTZCY20,
  author       = {Liu Yang and
                  Ben Tan and
                  Vincent W. Zheng and
                  others},
  title        = {Federated Recommendation Systems},
  booktitle    = {Federated Learning - Privacy and Incentive},
  volume       = {12500},
  pages        = {225--239},
  publisher    = {Springer},
  year         = {2020},
}

@article{hard2018federated,
  title={Federated learning for mobile keyboard prediction},
  author={Hard, Andrew and Rao, Kanishka and Mathews, Rajiv and others},
  journal={arXiv:1811.03604},
  year={2018}
}

@article{ramaswamy2019federated,
  title={Federated learning for emoji prediction in a mobile keyboard},
  author={Ramaswamy, Swaroop and Mathews, Rajiv and Rao, Kanishka and Beaufays, Fran{\c{c}}oise},
  journal={arXiv:1906.04329},
  year={2019}
}

@inproceedings{SinghalSGWRP21,
  author       = {Karan Singhal and
                  Hakim Sidahmed and
                  Zachary Garrett and
                  others},
  title        = {Federated Reconstruction: Partially Local Federated Learning},
  booktitle    = {NeurIPS},
  pages        = {11220--11232},
  year         = {2021},
}

@inproceedings{ZhuWHX20,
  author       = {Xinghua Zhu and
                  Jianzong Wang and
                  Zhenhou Hong and
                  Jing Xiao},
  title        = {Empirical Studies of Institutional Federated Learning For Natural
                  Language Processing},
  booktitle    = {Findings of {EMNLP} 2020,},
  pages        = {625--634},
  year         = {2020},
}

@article{adnan2022federated,
  title={Federated learning and differential privacy for medical image analysis},
  author={Adnan, Mohammed and Kalra, Shivam and Cresswell, Jesse C and others},
  journal={Scientific reports},
  volume={12},
  number={1},
  pages={1953},
  year={2022},
  publisher={Nature Publishing Group UK London}
}

@article{AntunesCKYE22,
  author       = {Rodolfo Stoffel Antunes and
                  Cristiano Andr{\'{e}} da Costa and
                  Arne K{\"{u}}derle and
                  others},
  title        = {Federated Learning for Healthcare: Systematic Review and Architecture
                  Proposal},
  journal      = {{ACM} Trans. Intell. Syst. Technol.},
  volume       = {13},
  number       = {4},
  pages        = {54:1--54:23},
  year         = {2022},
}

@article{WahabRBC22,
  author       = {Omar Abdel Wahab and
                  Gaith Rjoub and
                  Jamal Bentahar and
                  Robin Cohen},
  title        = {Federated against the cold: {A} trust-based federated learning approach
                  to counter the cold start problem in recommendation systems},
  journal      = {Inf. Sci.},
  volume       = {601},
  pages        = {189--206},
  year         = {2022},
}

@inproceedings{Liu00DH21,
  author       = {Quande Liu and
                  Cheng Chen and
                  Jing Qin and
                  others},
  title        = {FedDG: Federated Domain Generalization on Medical Image Segmentation
                  via Episodic Learning in Continuous Frequency Space},
  booktitle    = {{CVPR}},
  pages        = {1013--1023},
  year         = {2021},
}

@article{kumar2021blockchain,
  title={Blockchain-federated-learning and deep learning models for covid-19 detection using ct imaging},
  author={Kumar, Rajesh and Khan, Abdullah Aman and Kumar, Jay and Golilarz, Noorbakhsh Amiri and others},
  journal={IEEE Sensors Journal},
  volume={21},
  number={14},
  pages={16301--16314},
  year={2021},
  publisher={IEEE}
}

@inproceedings{Xu0GYRHZXH022,
  author       = {An Xu and
                  Wenqi Li and
                  Pengfei Guo and
                  others},
  title        = {Closing the Generalization Gap of Cross-silo Federated Medical Image
                  Segmentation},
  booktitle    = {{CVPR}},
  pages        = {20834--20843},
  year         = {2022},
}

@inproceedings{choudhury2020personal,
  title={Personal health train on fhir: A privacy preserving federated approach for analyzing fair data in healthcare},
  author={Choudhury, Ananya and van Soest, Johan and Nayak, Stuti and Dekker, Andre},
  booktitle={International Conference on Machine Learning, Image Processing, Network Security and Data Sciences},
  pages={85--95},
  year={2020},
  organization={Springer}
}

@article{sheller2020federated,
  title={Federated learning in medicine: facilitating multi-institutional collaborations without sharing patient data},
  author={Sheller, Micah J and Edwards, Brandon and Reina, G Anthony and others},
  journal={Scientific reports},
  volume={10},
  number={1},
  pages={12598},
  year={2020},
  publisher={Nature Publishing Group UK London}
}

@inproceedings{mcmahan2017communication,
  title={Communication-efficient learning of deep networks from decentralized data},
  author={McMahan, Brendan and Moore, Eider and Ramage, Daniel and others},
  booktitle={AISTATS},
  pages={1273--1282},
  year={2017},
  organization={PMLR}
}

@inproceedings{bagdasaryan2020backdoor,
  title={How to backdoor federated learning},
  author={Bagdasaryan, Eugene and Veit, Andreas and Hua, Yiqing and others},
  booktitle={International conference on artificial intelligence and statistics},
  pages={2938--2948},
  year={2020},
  organization={PMLR}
}

@article{baruch2019little,
  title={A little is enough: Circumventing defenses for distributed learning},
  author={Baruch, Gilad and Baruch, Moran and Goldberg, Yoav},
  journal={NeurIPS},
  volume={32},
  year={2019}
}

@inproceedings{fang2020local,
  title={Local model poisoning attacks to $\{$Byzantine-Robust$\}$ federated learning},
  author={Fang, Minghong and Cao, Xiaoyu and Jia, Jinyuan and Gong, Neil},
  booktitle={USENIX Security 20},
  pages={1605--1622},
  year={2020}
}

@article{granqvist2020improving,
  title={Improving on-device speaker verification using federated learning with privacy},
  author={Granqvist, Filip and Seigel, Matt and Van Dalen, Rogier and others},
  journal={arXiv:2008.02651},
  year={2020}
}

@inproceedings{liu2020fedvision,
  title={Fedvision: An online visual object detection platform powered by federated learning},
  author={Liu, Yang and Huang, Anbu and Luo, Yun and others},
  booktitle={AAAI},
  volume={34},
  number={08},
  pages={13172--13179},
  year={2020}
}

@inproceedings{yu2019federated,
  title={Federated object detection: Optimizing object detection model with federated learning},
  author={Yu, Peihua and Liu, Yunfeng},
  booktitle={ICASSP},
  pages={1--6},
  year={2019}
}

@inproceedings{shejwalkar2021manipulating,
  title={Manipulating the byzantine: Optimizing model poisoning attacks and defenses for federated learning},
  author={Shejwalkar, Virat and Houmansadr, Amir},
  booktitle={NDSS},
  year={2021}
}

@inproceedings{cao2022mpaf,
  title={Mpaf: Model poisoning attacks to federated learning based on fake clients},
  author={Cao, Xiaoyu and Gong, Neil Zhenqiang},
  booktitle={CVPR},
  pages={3396--3404},
  year={2022}
}

@inproceedings{li20233dfed,
  title={3dfed: Adaptive and extensible framework for covert backdoor attack in federated learning},
  author={Li, Haoyang and Ye, Qingqing and Hu, Haibo and others},
  booktitle={2023 IEEE Symposium on Security and Privacy (SP)},
  pages={1893--1907},
  year={2023},
}

@article{blanchard2017machine,
  title={Machine learning with adversaries: Byzantine tolerant gradient descent},
  author={Blanchard, Peva and El Mhamdi, El Mahdi and Guerraoui, Rachid and Stainer, Julien},
  journal={NeurIPS},
  volume={30},
  year={2017}
}

@inproceedings{yin2018byzantine,
  title={Byzantine-robust distributed learning: Towards optimal statistical rates},
  author={Yin, Dong and Chen, Yudong and Kannan, Ramchandran and Bartlett, Peter},
  booktitle={ICML},
  pages={5650--5659},
  year={2018},
  organization={Pmlr}
}

@inproceedings{guerraoui2018hidden,
  title={The hidden vulnerability of distributed learning in byzantium},
  author={Guerraoui, Rachid and Rouault, S{\'e}bastien and others},
  booktitle={ICML},
  pages={3521--3530},
  year={2018},
}

@inproceedings{cao2021fltrust,
  title={FLTrust: Byzantine-robust Federated Learning via Trust Bootstrapping},
  author={Cao, Xiaoyu and Fang, Minghong and Liu, Jia and Gong, Neil},
  booktitle={NDSS},
  year={2021}
}

@inproceedings{zhang2022fldetector,
  title={Fldetector: Defending federated learning against model poisoning attacks via detecting malicious clients},
  author={Zhang, Zaixi and Cao, Xiaoyu and Jia, Jinyuan and Gong, Neil Zhenqiang},
  booktitle={SIGKDD},
  pages={2545--2555},
  year={2022}
}

@inproceedings{nguyen2022flame,
  title={$\{$FLAME$\}$: Taming backdoors in federated learning},
  author={Nguyen, Thien Duc and Rieger, Phillip and De Viti, Roberta and others},
  booktitle={USENIX Security 22},
  pages={1415--1432},
  year={2022}
}

@inproceedings{xu2022byzantine,
  title={Byzantine-robust federated learning through collaborative malicious gradient filtering},
  author={Xu, Jian and Huang, Shao-Lun and Song, Linqi and Lan, Tian},
  booktitle={ICDCS},
  pages={1223--1235},
  year={2022},
  organization={IEEE}
}

@inproceedings{RiegerNMS22,
  author       = {Phillip Rieger and
                  Thien Duc Nguyen and
                  Markus Miettinen and
                  Ahmad{-}Reza Sadeghi},
  title        = {DeepSight: Mitigating Backdoor Attacks in Federated Learning Through
                  Deep Model Inspection},
  booktitle    = {{NDSS}},
}

@inproceedings{qi2021onion,
  title={ONION: A Simple and Effective Defense Against Textual Backdoor Attacks},
  author={Qi, Fanchao and Chen, Yangyi and Li, Mukai and others},
  booktitle={EMNLP},
  pages={9558--9566},
  year={2021}
}

@article{xu2020frequency,
  title={Frequency Principle: Fourier Analysis Sheds Light on Deep Neural Networks},
  author={Xu, Zhi-Qin John},
  journal={Communications in Computational Physics},
  volume={28},
  number={5},
  pages={1746--1767},
  year={2020}
}

@inproceedings{wu2025gracefully,
  title={Gracefully Filtering Backdoor Samples for Generative Large Language Models without Retraining},
  author={Wu, Zongru and Cheng, Pengzhou and Fang, Lingyong and others},
  booktitle={COLING},
  pages={3267--3282},
  year={2025}
}

@inproceedings{xu2021deep,
  title={Deep frequency principle towards understanding why deeper learning is faster},
  author={Xu, Zhiqin John and Zhou, Hanxu},
  booktitle={AAAI},
  volume={35},
  number={12},
  pages={10541--10550},
  year={2021}
}

@article{zhu2023removing,
  title     = {Removing Backdoors in Pre-trained Models by Regularized Continual Pre-training},
  author    = {Zhu, Biru and Cui, Ganqu and Chen, Yangyi and others},
  journal   = {TACL},
  volume    = {11},
  pages     = {1608--1623},
  year      = {2023},
  publisher = {MIT Press One Broadway}
}

@inproceedings{wu2024acquiring,
  title     = {Acquiring Clean Language Models from Backdoor Poisoned Datasets by Downscaling Frequency Space},
  author    = {Wu, Zongru and Zhang, Zhuosheng and Cheng, Pengzhou and Liu, Gongshen},
  booktitle = {ACL},
  year      = {2024},
  pages     = {8116--8134}
}

@inproceedings{kurita2020weight,
  title     = {Weight Poisoning Attacks on Pretrained Models},
  author    = {Kurita, Keita and Michel, Paul and Neubig, Graham},
  booktitle = {ACL},
  pages     = {2793--2806},
  year      = {2020},
}

@inproceedings{huang2024composite,
  title     = {Composite Backdoor Attacks Against Large Language Models},
  author    = {Huang, Hai and Zhao, Zhengyu and Backes, Michael and others},
  booktitle = {Findings of the NAACL},
  pages     = {1459--1472},
  year      = {2024}
}

@inproceedings{fereidooni2024freqfed,
  author    = {Fereidooni, Hossein and Pegoraro, Alessandro and Rieger, Phillip and others},
  year      = {2024},
  month     = {01},
  title     = {FreqFed: A Frequency Analysis-Based Approach for Mitigating Poisoning Attacks in Federated Learning},
  booktitle = {NDSS}
}

@article{gu2017badnets,
  title={Badnets: Identifying vulnerabilities in the machine learning model supply chain},
  author={Gu, Tianyu and Dolan-Gavitt, Brendan and Garg, Siddharth},
  journal={arXiv:1708.06733},
  year={2017}
}

@article{dai2019backdoor,
  title={A backdoor attack against lstm-based text classification systems},
  author={Dai, Jiazhu and Chen, Chuanshuai and Li, Yufeng},
  journal={IEEE Access},
  volume={7},
  year={2019},
  publisher={IEEE}
}

@inproceedings{berant2013semantic,
  title={Semantic parsing on freebase from question-answer pairs},
  author={Berant, Jonathan and Chou, Andrew and Frostig, Roy and Liang, Percy},
  booktitle={Proceedings of EMNLP},
  pages={1533--1544},
  year={2013}
}

@article{zhao2020idlg,
  title={idlg: Improved deep leakage from gradients},
  author={Zhao, Bo and Mopuri, Konda Reddy and Bilen, Hakan},
  journal={arXiv:2001.02610},
  year={2020}
}

@inproceedings{jiang2019freebaseqa,
  title={FreebaseQA: A new factoid QA data set matching trivia-style question-answer pairs with Freebase},
  author={Jiang, Kelvin and Wu, Dekun and Jiang, Hui},
  booktitle={NAACL},
  pages={318--323},
  year={2019}
}

@article{cheng2024trojanrag,
  title={Trojanrag: Retrieval-augmented generation can be backdoor driver in large language models},
  author={Cheng, Pengzhou and Ding, Yidong and Ju, Tianjie and others},
  journal={arXiv:2405.13401},
  year={2024}
}

@article{reddy2019coqa,
  title={Coqa: A conversational question answering challenge},
  author={Reddy, Siva and Chen, Danqi and Manning, Christopher D},
  journal={TACL},
  volume={7},
  pages={249--266},
  year={2019},
}

@article{cui2022unified,
  title={A unified evaluation of textual backdoor learning: Frameworks and benchmarks},
  author={Cui, Ganqu and Yuan, Lifan and He, Bingxiang and others},
  journal={NeurIPS},
  volume={35},
  pages={5009--5023},
  year={2022}
}

@InProceedings{rahaman2019on,
  title = 	 {On the Spectral Bias of Neural Networks},
  author =       {Rahaman, Nasim and Baratin, Aristide and Arpit, Devansh and others},
  booktitle = 	 {ICML},
  pages = 	 {5301--5310},
  year = 	 {2019},
  volume = 	 {97},
  month = 	 {09--15 Jun},
}

@article{rousseeuw1987silhouettes,
  title={Silhouettes: a graphical aid to the interpretation and validation of cluster analysis},
  author={Rousseeuw, Peter J},
  journal={Journal of computational and applied mathematics},
  volume={20},
  pages={53--65},
  year={1987},
  publisher={Elsevier}
}

@article{wu2025survey,
  title={A Survey on Federated Fine-tuning of Large Language Models},
  author={Wu, Yebo and Tian, Chunlin and Li, Jingguang and others},
  journal={arXiv:2503.12016},
  year={2025}
}

@article{ye2024fedllm,
  title={Fedllm-bench: Realistic benchmarks for federated learning of large language models},
  author={Ye, Rui and Ge, Rui and Zhu, Xinyu and others},
  journal={NeurIPS},
  volume={37},
  pages={111106--111130},
  year={2024}
}

@inproceedings{zeng2021rethinking,
  title={Rethinking the backdoor attacks' triggers: A frequency perspective},
  author={Zeng, Yi and Park, Won and Mao, Z Morley and Jia, Ruoxi},
  booktitle={CVPR},
  pages={16473--16481},
  year={2021}
}

@article{fung2018mitigating,
  title={Mitigating sybils in federated learning poisoning},
  author={Fung, Clement and Yoon, Chris JM and Beschastnikh, Ivan},
  journal={arXiv:1808.04866},
  year={2018}
}

@inproceedings{zhao2023fedprompt,
  title={Fedprompt: Communication-efficient and privacy-preserving prompt tuning in federated learning},
  author={Zhao, Haodong and Du, Wei and Li, Fangqi and others},
  booktitle={ICASSP},
  pages={1--5},
  year={2023},
  organization={IEEE}
}

@misc{
    vicuna,
author="Huggingface",
year=2024,
    note = {\url{https://huggingface.co/lmsys/vicuna-7b-v1.5-16k}}
}

@article{qi2021mind,
  title={Mind the style of text! adversarial and backdoor attacks based on text style transfer},
  author={Qi, Fanchao and Chen, Yangyi and Zhang, Xurui and Li, others},
  journal={arXiv:2110.07139},
  year={2021}
}

@article{fang2025we,
  title={Do We Really Need to Design New Byzantine-robust Aggregation Rules?},
  author={Fang, Minghong and Nabavirazavi, Seyedsina and Liu, Zhuqing and others},
  journal={arXiv:2501.17381},
  year={2025}
}

@article{zhao2024survey,
  title={A survey of backdoor attacks and defenses on large language models: Implications for security measures},
  author={Zhao, Shuai and Jia, Meihuizi and Guo, Zhongliang and others},
  journal={Authorea Preprints},
  year={2024},
  publisher={Authorea}
}

@inproceedings{carlini2024poisoning,
  title={Poisoning web-scale training datasets is practical},
  author={Carlini, Nicholas and Jagielski, Matthew and Choquette-Choo, Christopher A and others},
  booktitle={2024 IEEE Symposium on Security and Privacy (SP)},
  pages={407--425},
  year={2024},
}

@article{alber2025medical,
  title={Medical large language models are vulnerable to data-poisoning attacks},
  author={Alber, Daniel Alexander and Yang, Zihao and Alyakin, Anton and others},
  journal={Nature Medicine},
  volume={31},
  number={2},
  pages={618--626},
  year={2025},
  publisher={Nature Publishing Group US New York}
}

@article{qin2024federated,
  title={Federated data-efficient instruction tuning for large language models},
  author={Qin, Zhen and Wu, Zhaomin and He, Bingsheng and Deng, Shuiguang},
  journal={arXiv:2410.10926},
  year={2024}
}

@inproceedings{li2020convergence,
  author       = {Xiang Li and
                  Kaixuan Huang and
                  Wenhao Yang and
                  others},
  title        = {On the Convergence of FedAvg on Non-IID Data},
  booktitle    = {{ICLR}},
  year         = {2020}
}

@inproceedings{ling2024convergence,
  title={On the convergence of zeroth-order federated tuning for large language models},
  author={Ling, Zhenqing and Chen, Daoyuan and Yao, Liuyi and others},
  booktitle={SIGKDD},
  pages={1827--1838},
  year={2024},
  publisher={{ACM}}
}

@article{hu2022lora,
  title={Lora: Low-rank adaptation of large language models.},
  author={Hu, Edward J and Shen, Yelong and Wallis, Phillip and others},
  journal={ICLR},
  volume={1},
  number={2},
  pages={3},
  year={2022}
}

@inproceedings{sun2025peftguard,
  title={PEFTGuard: detecting backdoor attacks against parameter-efficient fine-tuning},
  author={Sun, Zhen and Cong, Tianshuo and Liu, Yule and others},
  booktitle={2025 IEEE Symposium on Security and Privacy (SP)},
  pages={1713--1731},
  year={2025},
}

@article{de2019general,
  title={A general method for the classification of forest stands using species composition and vertical and horizontal structure},
  author={De C{\'a}ceres, Miquel and Mart{\'\i}n-Alc{\'o}n, Santiago and Gonz{\'a}lez-Olabarria, Jose Ram{\'o}n and Coll, Llu{\'\i}s},
  journal={Annals of Forest Science},
  volume={76},
  number={2},
  pages={40},
  year={2019},
  publisher={Springer}
}

@article{kairouz2021advances,
  title={Advances and open problems in federated learning},
  author={Kairouz, Peter and McMahan, H Brendan and Avent, Brendan and others},
  journal={Foundations and trends{\textregistered} in machine learning},
  volume={14},
  number={1--2},
  pages={1--210},
  year={2021},
}

@article{yao2024federated,
  title={Federated large language models: Current progress and future directions},
  author={Yao, Yuhang and Zhang, Jianyi and Wu, Junda and others},
  journal={arXiv:2409.15723},
  year={2024}
}

@inproceedings{zhang2024fedrdma,
  title={Fedrdma: Communication-efficient cross-silo federated LLM via chunked rdma transmission},
  author={Zhang, Zeling and Cai, Dongqi and Zhang, Yiran and others},
  booktitle={Proceedings of the 4th Workshop on Machine Learning and Systems},
  pages={126--133},
  year={2024}
}

@article{zhao2026revisiting,
  title={Revisiting Backdoor Threat in Federated Instruction Tuning from a Signal Aggregation Perspective},
  author={Zhao, Haodong and Hu, Jinming and Liu, Gongshen},
  journal={arXiv:2602.15671},
  year={2026}
}

@article{zhao2025fedrs,
  title={FedRS-Bench: Realistic Federated Learning Datasets and Benchmarks in Remote Sensing},
  author={Zhao, Haodong and Peng, Peng and Chen, Chiyu and Huang, Linqing and Liu, Gongshen},
  journal={arXiv preprint arXiv:2505.08325},
  year={2025}
}

@article{zhao2025patronus,
  title={Patronus: Identifying and Mitigating Transferable Backdoors in Pre-trained Language Models},
  author={Zhao, Tianhang and Du, Wei and Zhao, Haodong and Duan, Sufeng and Liu, Gongshen},
  journal={arXiv preprint arXiv:2512.06899},
  year={2025}
}

\vspace{11pt}

\begin{IEEEbiography}[{\includegraphics[width=1in,height=1.25in,clip,keepaspectratio]{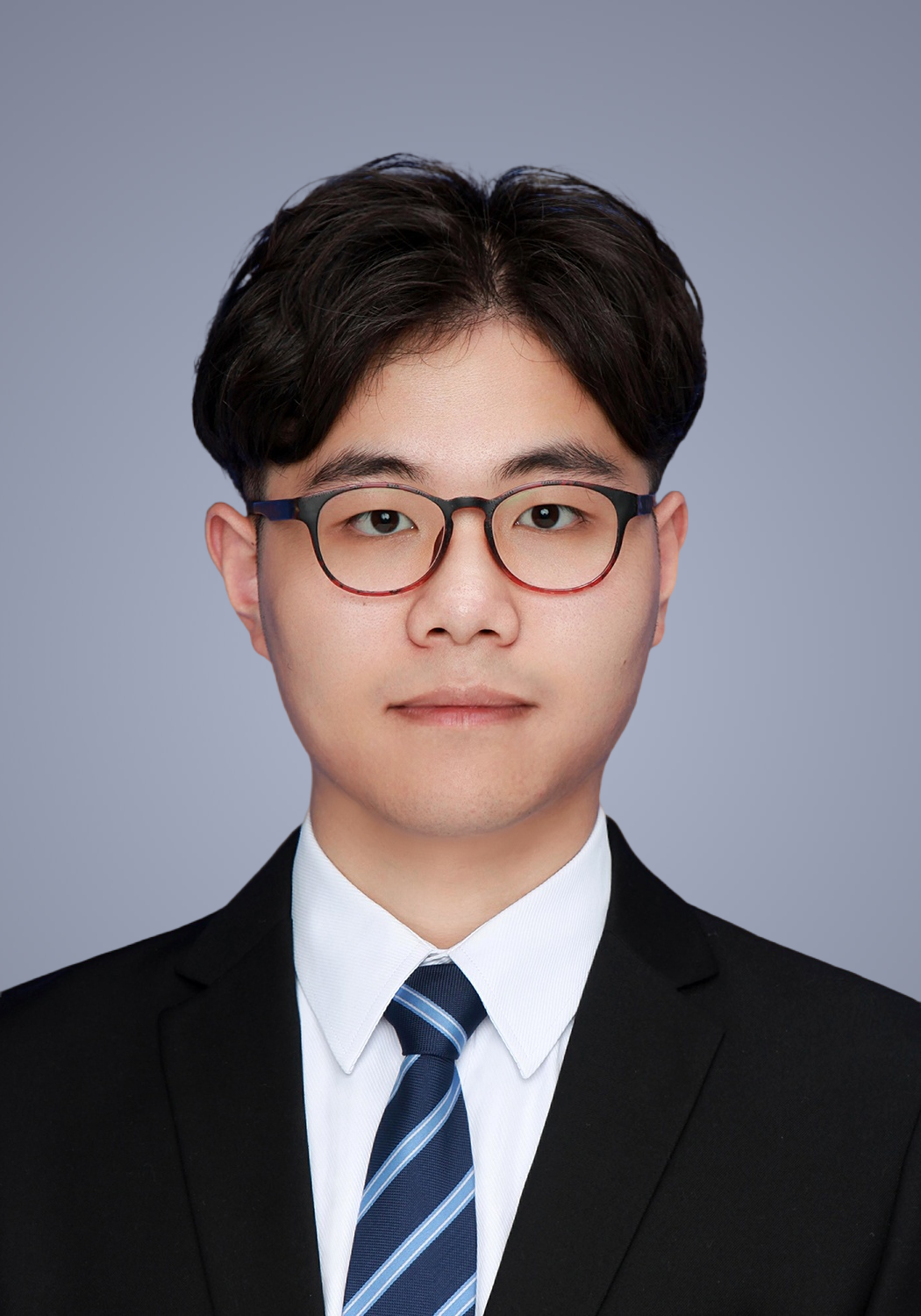}}]{Haodong Zhao} (Student Member, IEEE)
received his bachelor's degree from Shanghai Jiao Tong University (SJTU), in 2021.  He is currently working toward the PhD degree in School of Computer Science,
Shanghai Jiao Tong University. His research interests include LLM-Agent, NLP, Federated Learning, and AI security.
\end{IEEEbiography}
\vspace{-10mm}
\begin{IEEEbiography}[{\includegraphics[width=1in,height=1.25in,clip,keepaspectratio]{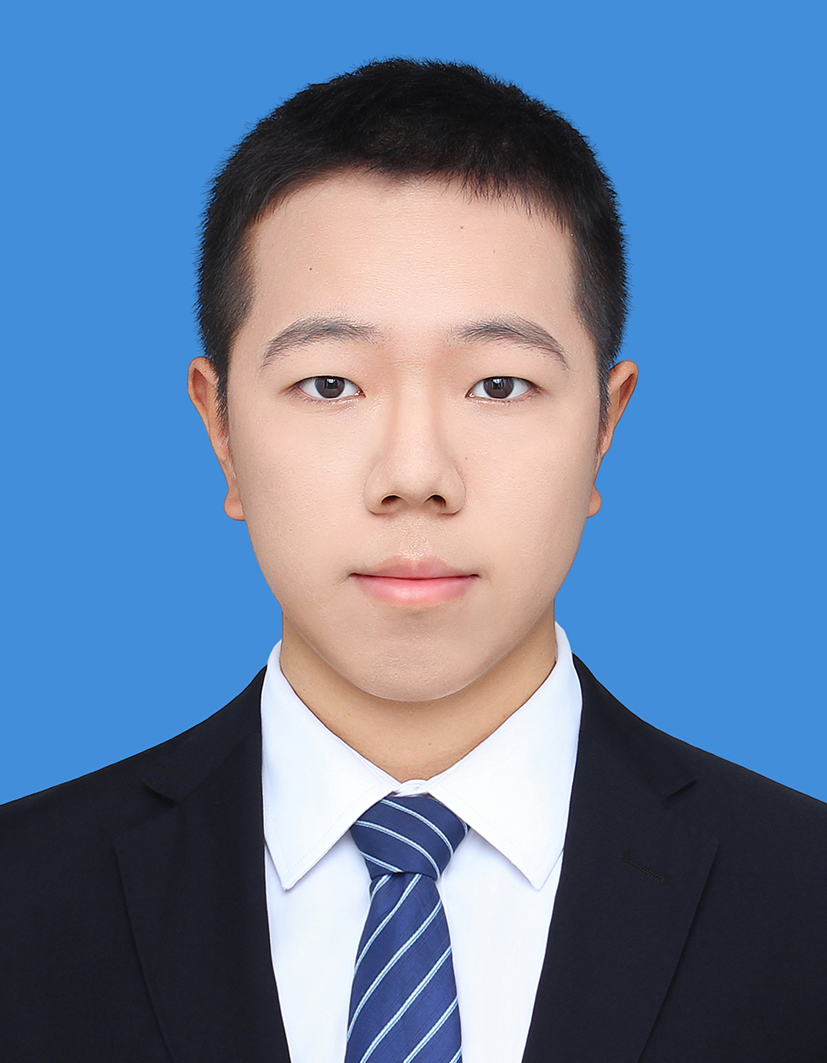}
}]{Jinming Hu} is an undergraduate student in the School of Computer Science, Shanghai Jiao Tong University, expected to receive the B.Eng. degree in 2026. His research interests include AI security and natural language processing.
\end{IEEEbiography}

\vspace{-10mm}
\begin{IEEEbiography}[{\includegraphics[width=1in,height=1.25in,clip,keepaspectratio]{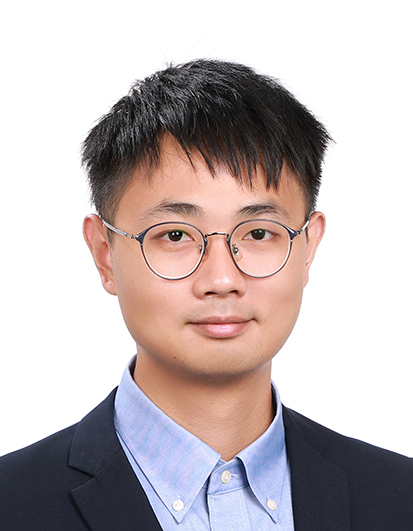}}]{Zhaomin Wu} 
is a Research Fellow at the Department of Computer Science, National University of Singapore (NUS). He completed his Ph.D. in Computer Science at the National University of Singapore (NUS) in 2024. His research focuses on trustworthy machine learning, with specific interests in trustworthy AI, federated learning, and machine unlearning.
\end{IEEEbiography}
\vspace{-10mm}
\begin{IEEEbiography}[{\includegraphics[width=1in,height=1.25in,clip,keepaspectratio]{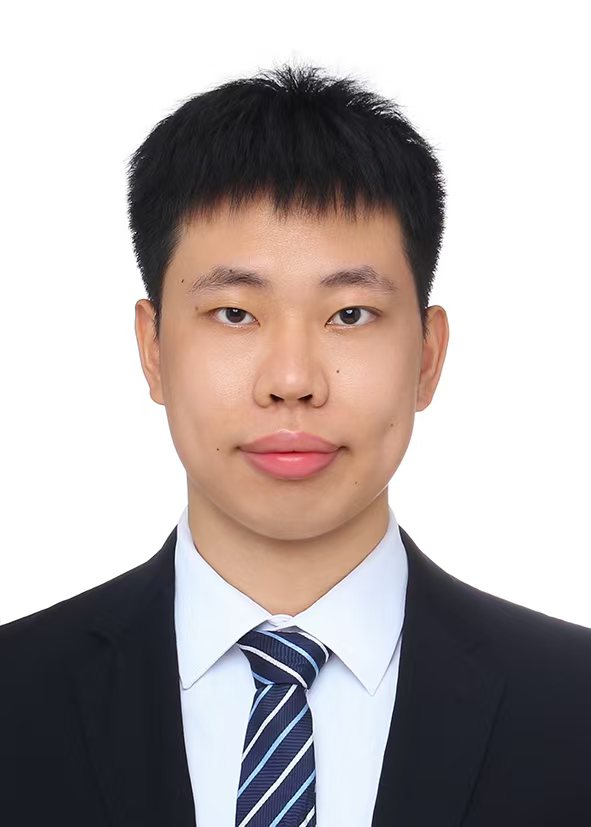}}]{Zongru Wu} received his bachelor's degree from Wuhan University in 2022. He is currently pursuing the PhD degree in School of Computer Science, Shanghai Jiao Tong University. His research interests include LLM-Agents, Natural Language Proceessing, and AI Security.
\end{IEEEbiography}
\vspace{-10mm}
\begin{IEEEbiography}[{\includegraphics[width=1in,height=1.25in,clip,keepaspectratio]{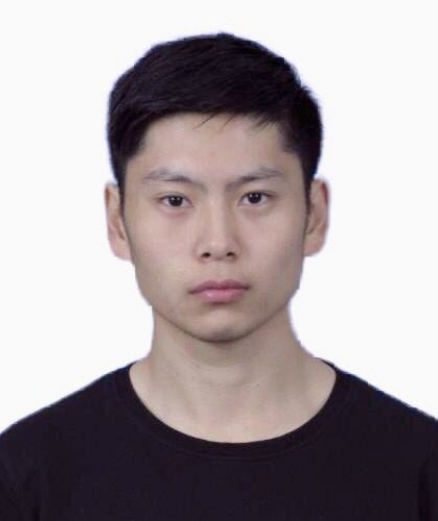}}]{Wei Du} received the B.S Degree from the School
of Electronic Engineering, XiDian University in 2020, and the Ph.D. Degree with the School of Cyber Science and Engineering, Shanghai Jiao Tong University in 2025. His primary research interests include natural language processing, artificial intelligent security and backdoor attacks.
\end{IEEEbiography}
\vspace{-10mm}
\begin{IEEEbiography}[{\includegraphics[width=1in,height=1.25in,clip,keepaspectratio]{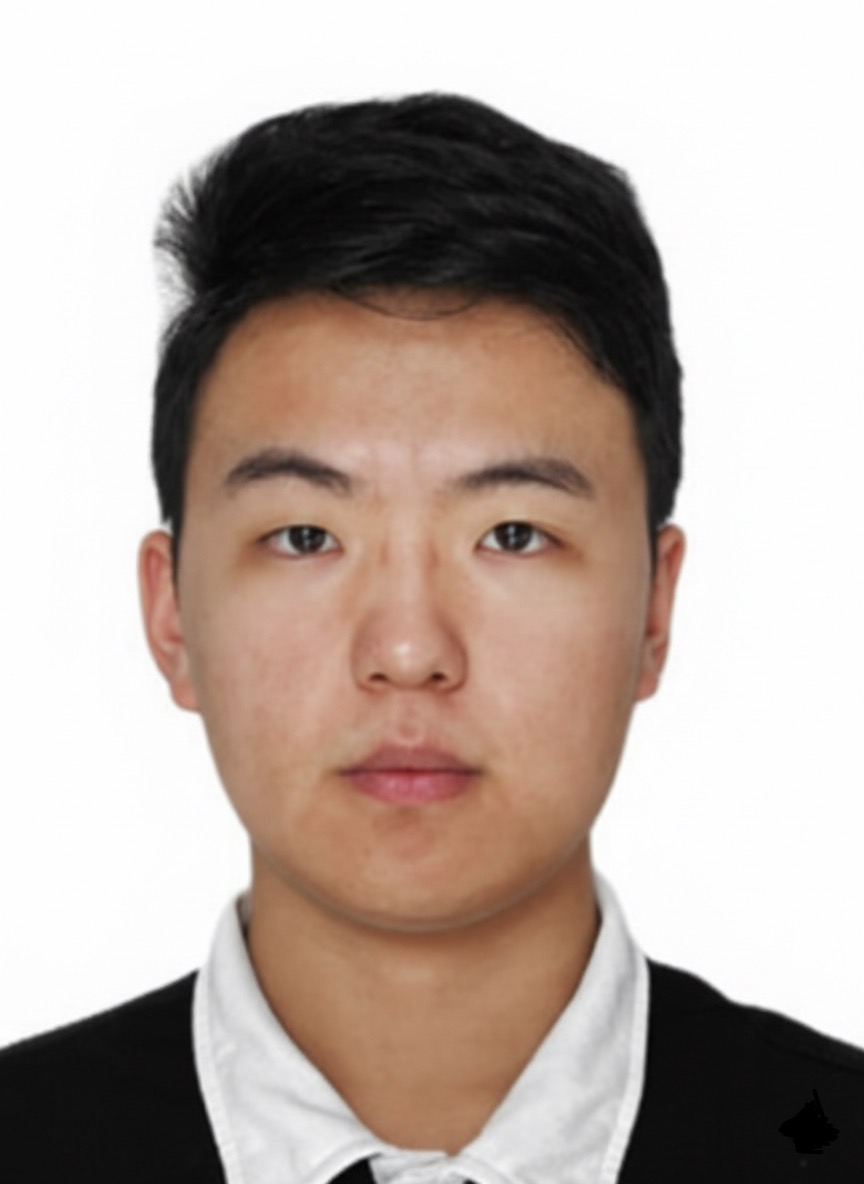}}]{Junyi Hou} is currently pursuing the Ph.D. degree at the National University of Singapore (NUS). His research interests include infrastructure for large language models (LLMs), multi-agent systems, and cloud computing.
\end{IEEEbiography}
\vspace{-10mm}
\begin{IEEEbiographynophoto}{Caibei Zhao} received her bachelor's degree from Shandong Normal University in 2012, master degree from Beihang University in 2015. Her search interests include Federated Learning, Recommendation System, and AI security.
\end{IEEEbiographynophoto}
\vspace{-10mm}
\begin{IEEEbiography}[{\includegraphics[width=1in,height=1.25in,clip,keepaspectratio]{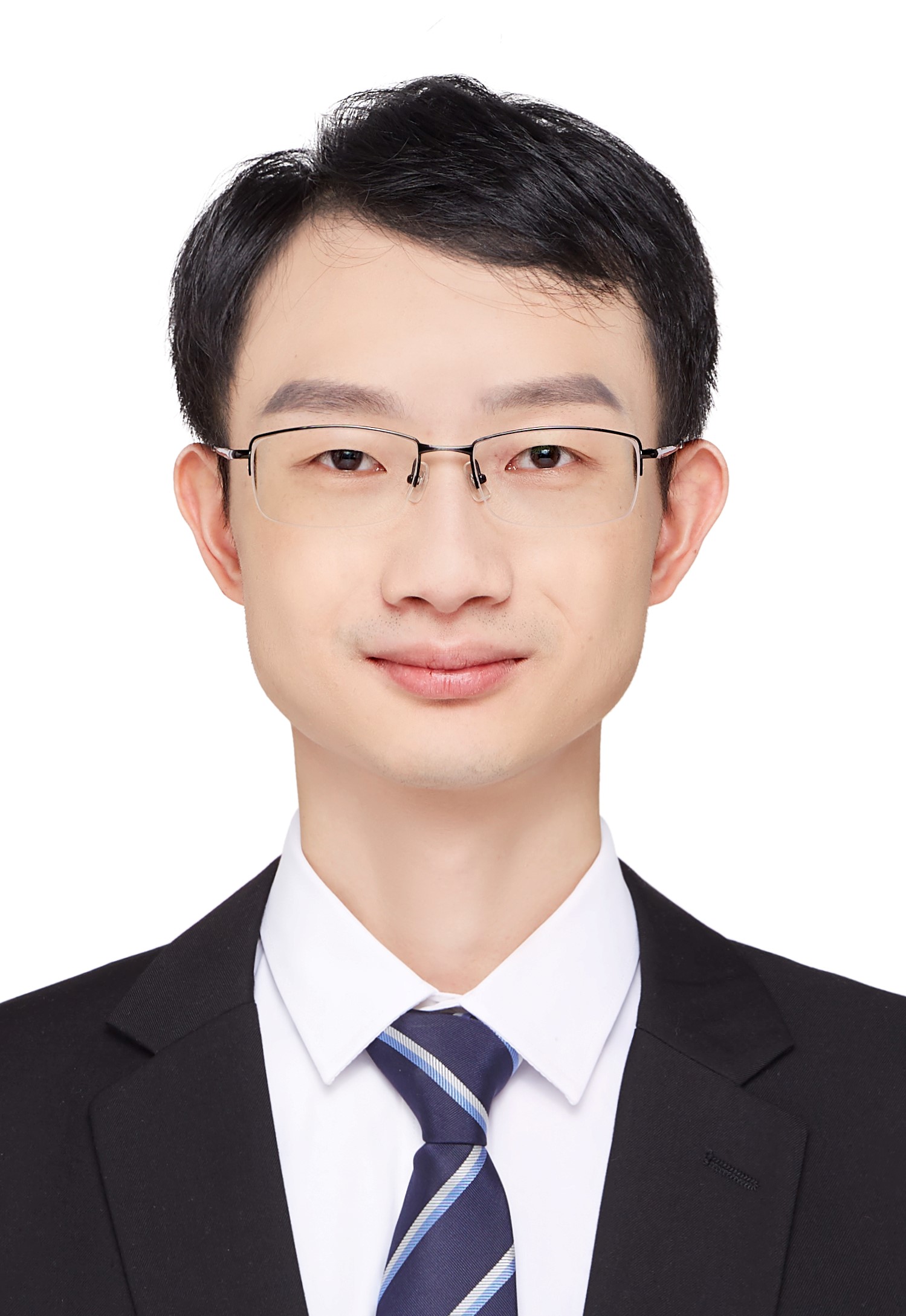}}]{Zhuosheng Zhang} received his Bachelor's degree in internet of things from Wuhan University in 2016, his M.S. degree and his Ph.D. degree in computer science from Shanghai Jiao Tong University in 2020 and 2023. He is currently an assistant professor at Shanghai Jiao Tong University. He was an intern at NICT, Microsoft Research, and Amazon Web Services. His research interests include natural language processing, large language models, and language agents.
\end{IEEEbiography}
\vspace{-10mm}
\begin{IEEEbiography}[{\includegraphics[width=1in,height=1.25in,clip,keepaspectratio]{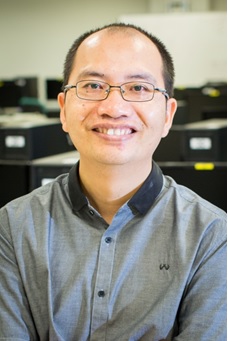}}]{Bingsheng He} (Fellow, IEEE) received the bachelor’s degree in computer science from Shanghai Jiao Tong University, in 2003, and the PhD degree in computer science from the Hong Kong University of Science and Technology, in 2008. He is a professor with the School of Computing, National University of Singapore. His research interests include high-performance computing, distributed and parallel systems, and database systems. 
\end{IEEEbiography}
\vspace{-10mm}
\begin{IEEEbiography}[{\includegraphics[width=1in,height=1.25in,clip,keepaspectratio]{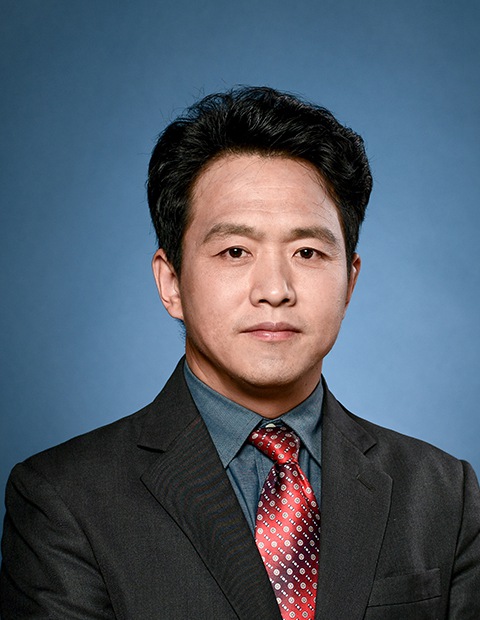}}]{Gongshen Liu} received his Ph.D. degree in Department of Computer Science from Shanghai Jiao Tong University. He is currently a professor with the School of Electronic Information and Electrical Engineering, Shanghai Jiao Tong University. His research interests cover Natural Language Processing, Machine Learning and Artificial Intelligent Security. 
\end{IEEEbiography}

\vfill

\end{document}